\def\R{\mathbb {R}}
\def\P{\mathbb {P}}
\def\N{\mathbb {N}}
\def\cN{\mathcal{N}}
\def\tp{^{\rm T}}
\let\oldhat\hat
\renewcommand{\vec}[1]{\bm{#1}}
\renewcommand{\hat}[1]{\oldhat{\mathbf{#1}}}
\newtheorem{theorem}{Theorem}
\newtheorem{definition}{Definition}
\newtheorem{lemma}[theorem]{Lemma}
\newtheorem{proposition}[theorem]{Proposition}
\newtheorem{problem}{Problem}
\newtheorem{remark}{Remark}
\title{\LARGE \bf A sub--optimal solution for optimal control\\of linear systems with unmeasurable switching delays}
\author{A. Cicone, A. D'Innocenzo, N. Guglielmi and L. Laglia
\thanks{The authors are with the Center of Excellence DEWS and with the Department of Information Engineering, Computer Science and Mathematics of the University of L'Aquila, Italy. The research leading to these results has received funding from the Italian Government under Cipe resolution n.135 (Dec. 21, 2012), project \emph{INnovating City Planning through Information and Communication Technologies} (INCIPICT)}
}
\begin{document}

\maketitle

\begin{abstract}
We consider the optimal control design problem for discrete--time LTI systems with state feedback, when the actuation signal is subject to unmeasurable switching propagation delays, due to e.g. the routing in a multi--hop communication network and/or jitter. In particular, we set up a constrained optimization problem where the cost function is the worst--case $\mathcal{L}_2$ norm for all admissible switching delays. We first show how to model these systems as pure switching linear systems, and as main contribution of the paper we provide an algorithm to compute a sub--optimal solution.
\end{abstract}

%%%%%%%%%%%%%%%%%%%%%%%%%%%%%%%%%%%%%%%%%%%%%%%%%%%%%%%%%%%%%%%%%%%%%%%%%%%%%%%%

\section{Introduction} \label{secIntro}

\begin{figure}[ht]
\begin{center}
%\vspace{-1cm}
\includegraphics[width=0.5\textwidth]{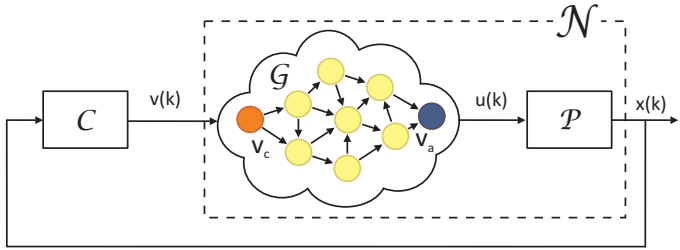}
%\vspace{-0.6cm}
\caption{State feedback control scheme.}\label{controlScheme}
%\vspace{-1.2cm}
\end{center}
\end{figure}

Wireless networked control systems are spatially distributed control systems where the communication between sensors, actuators, and computational units is supported by a wireless multi--hop communication network. The main motivation for studying such systems is the emerging use of wireless technologies for control systems (see e.g. \cite{SongIECON2010} and references therein) and the recent development of wireless industrial control protocols (such as WirelessHART and ISA--100). Although the use of wireless networked control systems offers many advantages with respect to wired architectures (e.g. flexible architectures, reduced installation/debugging/diagnostic/maintenance costs), their use is a challenge when one has to take into account the joint dynamics of the plant and of the communication protocol. Recently, a huge effort has been made in scientific research on Networked Control Systems (NCS) and on the interaction between control systems and communication protocols, see e.g.~\cite{HLA_TAC2009},~\cite{MurrayTAC2009},~\cite{Hespanha2007},~\cite{HeemelsTAC10},~\cite{TabbaraTAC2007}. In general, the literature on NCSs addresses non--idealities (e.g. quantization errors, packets dropouts, variable sampling and delay, communication constraints) as aggregated network performance variables or disturbances, neglecting the dynamics introduced by the communication protocols. To the best of our knowledge, the first integrated framework for analysis and co--design of network topology, scheduling, routing and control in a wireless multi--hop control network has been presented in \cite{AlurTAC11}, where switching systems are used as a unifying formalism for control algorithms and communication protocols and sufficient conditions for stabilizing the plant are provided. {In \cite{DInnocenzoTAC13} the networked system is sampled at the time scale of the period of the communication protocol scheduling, and this makes the system discrete--time linear time--invariant.

In \cite{JungersTAC2016} we refined the above model and considered a networked system sampled at the more accurate time scale of the transmission slots of each communication node: this makes the system discrete--time switching linear, which is a much more difficult mathematical framework. In particular, we assumed in our model that a multi--hop network $\mathcal G$ provides the interconnection between a state--feedback discrete--time controller $\mathcal C$ and a discrete--time LTI plant $\mathcal P$ (see Figure \ref{controlScheme}). The network $\mathcal G$ consists of an acyclic graph where the node $v_c$ is directly connected to the controller and the node $v_a$ is directly connected to the actuator of the plant. As classically done in (wireless) multi--hop networks to improve robustness with respect to node failures we assume that the number of paths that interconnect $\mathcal C$ to $\mathcal P$ is greater than one and that for any actuation data sent from the controller to the plant a unique path is chosen. Each path is characterized by a delay in forwarding the data (see \cite{DInnocenzoTAC13} for details), as a consequence each actuation data will be delayed of a finite number of time steps according to the chosen routing: as a consequence our system is characterized by switching time--varying delays of the input signal, due to routing as well as jitter. Since usually jitter is unpredictable and the routing choice depends on the internal status of the network, i.e. because of node and/or link failures, we considered the choice of the routing path and/or jitter effects as an external disturbance. In practical applications measurability of such disturbance switching signal (that represents the network induced delays at each time step) depends on the protocol used to route data (see \cite{YangWiMesh2011} and references therein for an overview on routing protocols for wireless multi--hop networks). If the controller node $v_c$ of $\mathcal G$ is allowed by the protocol to choose a priori the routing path (e.g. source routing protocols), then we can assume that the controller is aware of the routing path and the associated delay, and therefore can measure the switching signal. If instead the protocol allows each communication node to choose the next destination node according to the local neighboring network status information (e.g. hop--by--hop routing protocols) then we cannot assume that the controller can measure the switching signal. In \cite{JungersTAC2016} we first characterized stabilizability in the case when the switching signal is measurable, then we showed with some examples that when it is unmeasurable it is much harder to decide stabilizability, e.g. it can be necessary to make use of nonlinear controllers for stabilizing a linear plant.

Motivated by these results and examples, we address in this paper the control design problem when the switching signal is unmeasurable. In particular, we set up a constrained optimization problem where the cost function is the worst--case $\mathcal{L}_2$ norm for all admissible switching delays. We first show how to model these systems as pure switching linear systems, and as the main contribution of the paper we provide an algorithm to compute a sub--optimal solution. To the best of our knowledge this problem for linear systems with switching delays has not been addressed from the scientific literature so far. Also, since in our model we take into account the realistic situation where delay variations can be due to both routing in a multi--hop communication network and jitter, the set of admissible switching delays can be very large: indeed jitter can generate small delay variations, but routing via different paths can generate very large delay variations. As a consequence, classical robust control design approaches that are based on the assumption of slight delay variations around a central nominal delay are not applicable to our model. Also, since disturbances due to routing and jitter cannot be generally described by some predictable scheme or mathematical model, methods for estimation and prediction of the delay is generally useless.

Relating our results to the existing literature, we remark that analysis and controller design for systems with time--varying delays have attracted increasing attention in recent years (see e.g. \cite{LiuCTA2006}, \cite{HetelCDC2007}, \cite{HeemelsTAC2010}, \cite{ShaoTAC2011}, \cite{HetelIJC2012}, \cite{KaoAUTOMATICA07} and references therein). In particular, in \cite{HetelSCL2011} it is assumed that the time--varying delay is approximatively known and numerical methods are proposed to exploit this partial information for adapting the control law in real time. Our modeling choice is close to the framework in \cite{HetelCDC2007}. However our setting is more general and realistic in that, differently from \cite{HetelCDC2007}, it allows for several critical phenomena to happen: In our model, control commands generated at different times can reach the actuator simultaneously, their arrival time can be inverted, and it is even possible that at certain times no control commands arrive to the actuator. These phenomena are well known to be responsible for controllability issues in Multi--hop Control Networks. In summary, our contribution is providing a novel sub--optimal approach (which is reasonable, since the problem is still highly non--linear) leveraging the special structure induced by the switching delays and exploiting previous results on the computation of the extremal norm of a set of matrices.

The paper is organized as follows. In Section \ref{secModel} we provide the modeling framework and the problem formulation. In Sections \ref{secOverview} and \ref{secJSR} we first overview and then provide the main results of the paper. In Section \ref{secSimulations} we apply our procedure to an illustrative example. In Section \ref{secConclusions} we provide concluding remarks and open problems for future research.

%%%%%%%%%%%%%%%%%%%%%%%%%%%%%%%%%%%%%%%%%%%%%%%%%%%%%%%%%%%%%%%%%%%%%%%%%%%%%%%%

\section{Model and problem formulation} \label{secModel}

Consider a discrete--time linear time invariant system $\mathcal P$ described by the following difference equation:
$$
\vec{x}(k+1) = A_P \vec{x}(k) + B_P \vec{u}(k), \quad k \geq 0,
$$
where $\vec{x} \in \mathbb{R}^n$, $\vec{u} \in \mathbb{R}^m$, $A_P \in \mathbb R^{n \times n}$, $B_P \in \mathbb R^{n \times m}$. Figure~\ref{controlScheme} illustrates a state--feedback control scheme: the objective is that the state signal $\vec{x}(k)$ is as close as possible to a reference signal $\vec{r}(k) \in \mathbb{R}^n$. In this paper we assume that the reference signal $\vec{r}(k)=\vec{0}$, i.e. we wish to drive the state to its equilibrium point in the origin. The extension to the more general case when the reference signal is a Heaviside step function can be approached using a change of variable and the use of an integrator or a feed--forward term in the control scheme, and is out of the scope of this paper. We assume that the control signal $\vec{v}(k)$ generated by the controller $\mathcal C$ is transmitted to the actuator of the plant $\mathcal P$ via a multi--hop network. The network $\mathcal G$ consists of an acyclic graph $(V,E)$, where the node $v_c \in V$ is directly interconnected to the controller $\mathcal C$, and the node $v_a$ is directly interconnected to the actuator of the plant $\mathcal P$. In order to transmit each actuation data $\vec{v}(k)$ to the plant, at each time step $k$ a unique path of nodes that starts from $v_c$ and terminates in $v_a$ is exploited. As classically done in multi--hop (wireless) networks to improve robustness of the system with respect to node failures, we exploit redundancy of routing paths, therefore the number of paths that can be used to reach $v_a$ from $v_c$ is assumed to be greater than one. To each path a different delay can be associated in transmitting data from $v_c$ to $v_a$, depending on the transmission scheduling and on the number of hops to reach the actuator. Since the choice of the routing path usually depends on the internal status of the network (e.g. because of node and/or link failures, bandwidth constraints, security issues, etc.), we assume that the routing path chosen at each time step $k$ is time--varying. Therefore the actuation command $\vec{v}(k)$ at time $k$ will be delayed, according to the chosen routing path and the jitter, of a finite number of time steps, which we model as a disturbance signal $\sigma(k)$. For the reasons above, we can model the dynamics of a multi--hop control network as follows:
\medskip
\begin{definition}\label{def-general-sdsystem}
The dynamics of the interconnected system $\mathcal N$ can be modeled as
\begin{equation}\label{sdsystem}
\vec{x}_e(k+1) = A \vec{x}_e(k) + B(\sigma(k)) \vec{v}(k), \quad k \geq 0,
\end{equation}
where $\vec{x}_e = [\vec{x}\tp\ \vec{u}_{1}\tp\ \cdots\ \vec{u}_{d_{max}}\tp]\tp \in \mathbb{R}^{n + md_{max}}$, $\vec{u}_{i}, i = 1, \ldots, d_{max}$, $\vec{v} \in \mathbb{R}^m$, $\sigma(k) \in D$, for any $k \geq 0$, is a disturbance switching signal taking value in the set $D \subseteq \{0, 1, 2, \dots, d_{max}\}$, which is the set of possible delays introduced by all routing paths with $d_{max}$ the maximum delay, and
$$
A = \left[
      \begin{array}{lll}
        A_P & B_P & \textbf{0}_{n \times m(d_{max} - 1)} \\
        \textbf{0}_{md_{max} \times n} & \textbf{0}_{md_{max} \times m} & \Gamma_{md_{max} \times m(d_{max} - 1)} \\
      \end{array}
    \right],
$$

$$
\Gamma_{md_{max} \times m(d_{max} - 1)} = \left[
                                          \begin{array}{llll}
                                            \mathbb I_{m} & \textbf{0} & \cdots & \textbf{0} \\
                                            \textbf{0} & \mathbb I_{m} & \cdots & \textbf{0} \\
                                            \vdots & \vdots & \ddots & \vdots \\
                                            \textbf{0} & \textbf{0} & \cdots & \mathbb I_{m} \\
                                            \textbf{0} & \textbf{0} & \textbf{0} & \textbf{0} \\
                                          \end{array}
                                        \right],
$$

%$$
%B(\sigma(k)) = \left(B_P \delta_{0,\sigma(k)} \quad \mathbb I_{m} \delta_{1,\sigma(k)} \ \hdots \ \mathbb I_{m} \delta_{d_{max},\sigma(k)} \right)\tp,
%$$

$$
B(\sigma(k)) = \left(
                 \begin{array}{c}
                   B_P \delta_{0,\sigma(k)} \\
                   \mathbb I_{m} \delta_{1,\sigma(k)} \\
                   \vdots \\
                   \mathbb I_{m} \delta_{d_{max},\sigma(k)}\\
                 \end{array}
               \right),
$$
where $\delta_{i,\sigma(k)}, \forall i \in \{0, \ldots, d_{max}\}, k \geq 0$ is the Kronecker delta and $\mathbb I_{m}$ is the identity matrix of dimension $m$.
\end{definition}

We also assume that the controller is not aware of the switching signal $\sigma(k)$, but it can measure the error signal $\vec{e}(k) = \vec{x}(k) - \vec{r}(k)$, where the reference signal $\vec{r}(k)$ is considered for simplicity constantly zero for every $k\geq0$, as already discussed, hence $\vec{e}(k) = \vec{x}(k)$. As a consequence of these assumptions, we can not assume that the controller can measure the variables $\vec{u}_{1}\ \cdots\ \vec{u}_{d_{max}}$. We can instead assume that the controller keeps memory of the last $d_{max}$ measures of the error signal, namely $\vec{e}(k-1), \ldots, \vec{e}(k-d_{max})$:
\medskip
\begin{definition}\label{def-indep} Assume that the switching signal cannot be measured at any time instant. We define a control law as follows:
\begin{align}\label{eq-vt}
\vec{v}(k)&= K [\vec{e}\tp(k),\ \vec{e}\tp(k-1),\ \dots,\ \vec{e}\tp(k-d_{max})]\tp,\notag \\
K &= [K_0\, K_1 \cdots K_{d_{max}}] \in \mathbb{R}^{m \times n (d_{max}+1)}
\end{align}
\end{definition}
\medskip
Therefore the closed--loop system behavior consists of the extended state space
\begin{align}\label{eq:y_e}
y_e(k) = [&\vec{e}\tp(k),\ \vec{e}\tp(k-1),\ \dots,\ \vec{e}\tp(k-d_{max}),\notag \\
&\ \vec{u}_{1}(k),\ \cdots,\ \vec{u}_{d_{max}}(k)]\tp \in \mathbb{R}^{n (d_{max}+1) + md_{max}}.
\end{align}
\normalsize
If we define $\Sigma$ as the set of all switching signals of infinite length $\sigma=(\sigma(0),\, \sigma(1),\,\ldots\,)$ we can formulate the main goal of our research
\medskip
\begin{problem}\label{pb:MinErr}
Fixed the plant parameters $A_P$ and $B_P$, we want to design the controller $K$ in \eqref{eq-vt} such that the following cost function is minimized:
\begin{equation}\label{eq:MinErr}
\min_{K}\mathcal J = \min_{K}\max\limits_{\sigma \in \Sigma} \|\vec{e}\|^2_{\mathcal L_2} \doteq \min_{K}\max\limits_{\sigma \in \Sigma} \sum\limits_{k=0}^{\infty} \|\vec{e}(k)\|_2^2
\end{equation}
\end{problem}
\medskip

In other words, we want to design a controller that minimizes the worst case $\mathcal L_2$--norm of the error signal over all the possible infinite length switching signals $\sigma\in\Sigma$.

Since the cardinality of $\Sigma$ is infinite, the problem in hand, to the best of the authors' knowledge, cannot be solved exactly. For this reason, in the next Section we develop an approximation method which allows to produce upper bounds for \eqref{eq:MinErr}.

\section{Overview of the proposed method} \label{secOverview}

For a switched control system given by \eqref{sdsystem} and \eqref{eq-vt} we aim to compute a sharp upper bound for the $\ell_2$--norm of the sequence $\{\vec{e}(k)\}_{k \ge 0}$ over all possible switching laws and minimize this quantity with respect to the parameters of the controller $K_0,K_1,\ldots K_{d_{max}}$. A necessary and sufficient condition for the sequence being in $\ell_2$ is that the joint spectral radius of the associated set of matrices is smaller than~$1$: it is well known, in fact, that the maximal rate of growth among all products of matrices from a finite set $\mathcal M$ is given by the Joint Spectral Radius (JSR) of $\mathcal M$, $\rho(\mathcal M)$, which is the generalization of spectral radius to sets of matrices \cite{RotStr,BerWan,LagWan,DauLag1,GugZen08,cicone2015JSRnote,jungers_lncis}.

Motivated by this observation we first minimize the joint spectral radius with respect to $K = [K_0\, K_1 \cdots K_{d_{max}}]$ by a descent algorithm and then estimate the associated $\ell_2$--norm. The minimization of the joint spectral radius can be done directly on the underlying set of matrices to the switched system given by \eqref{sdsystem} and \eqref{eq-vt}. In order to get a sharp estimate of the $\ell_2$--norm we propose an approximation framework based on the identification of a certain polytope norm $\| \cdot \|_P$ (computed as we describe below) and its mathematical relation with the $2$--norm of the variable $\vec{e}(k)$. Since the dynamics of the switched control system given by \eqref{sdsystem} and \eqref{eq-vt} also involves the actuation variable $\vec{u}_i(k), i = 1, \ldots, d_{max}$, such formal relation is hard to state. To overcome this difficulty we derive \eqref{eq:sys2}, which is an equivalent formulation of the switched control system given by \eqref{sdsystem} and \eqref{eq-vt}, that only involves the sequence $\{ \vec{e}(k) \}$, but leads to an augmented formulation that extends the state dimension of the system to $2 d_{\max}+1$. This is done by considering switching sequences of a certain length (for example $2$ in the case $D = \{0,\ 1\}$), which however implies some constraint in the new system (for example the control sequence $00$ cannot be followed by the sequence $11$) and gives a Markovian structure \cite{kozyakin2014berger} to the joint spectral radius problem. This problem can be solved by a suitable lifting of the matrices (recently proposed by Dai and Kozyakyn \cite{dai2014robust,kozyakin2014berger}) of the set which determines an equivalent classical joint spectral radius problem for a new set of matrices. According to a methodology which has been recently developed \cite{protassov1996joint,GWZ05,Guglielmi2013Exact}, we compute the joint spectral radius by determining a polytope extremal norm (which we indicate as $\| \cdot \|_P$) of a set of matrices $\mathcal F$. In order to compute a bound for the $\ell_2$ norm we have to find a constant which relates the computed polytope norm $\| \cdot \|_P$ and the $2$--norm, i.e. $\| \vec{x} \| \le C_P \| \vec{x} \|_P$ and use the fact that for all matrices $F \in \mathcal F$ of the set we have $\| F \|_P = \rho(\mathcal F)$. This can be done efficiently by computing the dual polytope to the one computed by our algorithm. This estimate is useful to bound (by a geometric series) the norm of the sequence $\{ e(k) \}_{k \ge \eta}$ for $\eta$ an arbitrarily chosen natural number. For the first part of the sequence, which in many cases gives the most important contribute to the estimate, we compute directly all matrix products of degree smaller than $\tau \le \eta$ in the product semigroup and then use another estimate (still based on the construction of certain polytopes) to estimate the products of length $\tau+1 \le k \le \eta$ (an intermediate regime), which usually provides better bounds with respect to the geometric estimate, and which we shall describe in the following sections.

\section{Main results} \label{secJSR}

From now on we focus our attention on the case of $m=1$, thus $\vec{e}(k),\ \vec{e}(k-1),\ \dots,\ e(k-d_{max}) \in \mathbb R^n$ and $\vec{u}_{1},\ \cdots\,\ \vec{u}_{d_{max}} \in \mathbb R$. We assume that the initial conditions are given by an arbitrary $\vec{e}(0)$ while $\forall k < 0, e(k) = \mathbf{0} \in \mathbb R^n$.

Before stating the main results we present a technical Lemma.

\begin{lemma}\label{lemma_1} Given the system \eqref{sdsystem} and the control law \eqref{eq-vt},

Then, for every $k\geq 0$
\small
\begin{equation}\label{eq:sys2}
\vec\varepsilon_e(k+1) = N_{\sigma(k), \ldots, \sigma(k-d_{max})} \vec\varepsilon_e(k),
\end{equation}
\normalsize
with $\vec\varepsilon_e(k) \doteq [\vec{e}\tp(k),\ \vec{e}\tp(k-1),\ldots, \vec{e}\tp(k-2d_{max})]\tp$, $N_{\sigma(k), \ldots, \sigma(k-d_{max})} \in \mathbb{R}^{\left(n (2d_{max}+1)\right)^2}$, and
\begin{equation}\label{eq:N}
\cN=\left\{ N_{i_0 \ldots i_{d_{max}}}  \right\}_{i_0, \ldots, i_{d_{max}} \in D}, \ |\cN| = |D|^{d_{max}+1}
\end{equation}
\end{lemma}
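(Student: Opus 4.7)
The plan is to eliminate the actuation buffer $\vec{u}_1,\ldots,\vec{u}_{d_{max}}$ from the extended state in \eqref{sdsystem} and rewrite the closed--loop dynamics purely in terms of past errors. The structural observation is that the buffer is a shift register storing past values of $\vec{v}$ routed in by the switching signal $\sigma$; once unrolled and the feedback law \eqref{eq-vt} is substituted, the update of $\vec{e}(k+1)$ depends at most on the errors in the window $[k-2d_{max},k]$, which is precisely the augmented state $\vec\varepsilon_e$ appearing in \eqref{eq:sys2}.

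First I would read off the scalar block equations from the sparsity pattern of $A$ and $B(\sigma(k))$, namely
$$
\vec{u}_i(k+1) = \vec{u}_{i+1}(k) + \mathbb I_m\,\delta_{i,\sigma(k)}\,\vec{v}(k)\qquad (i<d_{max}),
$$
$$
\vec{u}_{d_{max}}(k+1) = \mathbb I_m\,\delta_{d_{max},\sigma(k)}\,\vec{v}(k).
$$
An induction on $i$ (working backward in time, with the convention $\vec{v}(j)=\mathbf 0$ for $j<0$ enforced by the stated initial conditions) yields the closed form
$$
\vec{u}_1(k) = \sum_{i=1}^{d_{max}} \delta_{i,\sigma(k-i)}\,\vec{v}(k-i).
$$
Substituting this into the first block of \eqref{sdsystem} and using $\vec{e}(k)=\vec{x}(k)$ gives
$$
\vec{e}(k+1) = A_P\,\vec{e}(k) + B_P\sum_{i=0}^{d_{max}}\delta_{i,\sigma(k-i)}\,\vec{v}(k-i),
$$
which merges the direct--injection term $B_P\delta_{0,\sigma(k)}\vec{v}(k)$ and the delayed contribution $B_P\vec{u}_1(k)$ into a single switched convolution in $\vec v$.

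Next I would replace every $\vec{v}(k-i)$ via the feedback law $\vec{v}(k-i)=\sum_{j=0}^{d_{max}} K_j\,\vec{e}(k-i-j)$. The indices $k-i-j$ range exactly over $[k-2d_{max},\,k]$, and the coefficients depend only on the tuple $(\sigma(k),\sigma(k-1),\ldots,\sigma(k-d_{max}))$. Stacking $\vec\varepsilon_e(k)=[\vec{e}\tp(k),\,\vec{e}\tp(k-1),\ldots,\vec{e}\tp(k-2d_{max})]\tp$ and appending the $2d_{max}$ trivial shift rows (identity blocks on the first sub--diagonal, zeros elsewhere) produces \eqref{eq:sys2}; by construction $N_{\sigma(k),\ldots,\sigma(k-d_{max})}$ is entirely determined by that $(d_{max}+1)$--tuple, so the family $\cN$ has cardinality at most $|D|^{d_{max}+1}$, establishing \eqref{eq:N}.

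The delicate step is the coefficient bookkeeping when assembling the top block--row of $N$: for each $\ell\in\{0,\ldots,2d_{max}\}$ the coefficient of $\vec{e}(k-\ell)$ takes the form $\delta_{\ell,0}A_P + B_P\sum_{i+j=\ell,\ 0\le i,j\le d_{max}}\delta_{i,\sigma(k-i)}K_j$, and one must keep each Kronecker factor $\delta_{i,\sigma(k-i)}$ tied to its own index $i$ throughout the double sum. The worst--case $\ell=2d_{max}$ attained when $i=j=d_{max}$ is exactly what forces the augmented state dimension $n(2d_{max}+1)$ stated in the Lemma.
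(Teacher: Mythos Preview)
Your proof is correct and follows essentially the same route as the paper: unroll the shift register to express $\vec{u}_1(k)$ as $\sum_{i=1}^{d_{max}}\delta_{i,\sigma(k-i)}\vec{v}(k-i)$, substitute into the plant equation to obtain $\vec{e}(k+1)=A_P\vec{e}(k)+B_P\sum_{i=0}^{d_{max}}\delta_{i,\sigma(k-i)}\sum_{j=0}^{d_{max}}K_j\vec{e}(k-i-j)$, and then stack the window of errors. If anything, you are more explicit than the paper, which stops at the double sum and writes ``Then the conclusion follows''; your discussion of the shift rows, the coefficient collection for each $\ell$, and the reason the window length must be $2d_{max}+1$ fills in exactly what the paper leaves implicit.
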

\vskip 2mm
\begin{proof}
We start rewriting the system (\ref{sdsystem}), together with the control law \eqref{eq-vt}, as follows:
\begin{equation}\label{eq:sys_n}
y_e(k+1)=
%&=\left[
%  \begin{array}{l}
%    \vec{e}(k+1) \\
%    \vec{e}(k) \\
%    \vdots \\
%    \vec{e}(k-d_{max}+1) \\
%    \vec{u}_1(k+1) \\
%    \vdots \\
%    \vec{u}_{d_{max}}(k+1) \\
%  \end{array}
%\right] \notag\\
%&= M(\sigma(k)) \left[
%  \begin{array}{l}
%    \vec{e}(k) \\
%    \vec{e}(k-1) \\
%    \vdots \\
%    \vec{e}(k-d_{max}) \\
%    \vec{u}_1(k) \\
%    \vdots \\
%    \vec{u}_{d_{max}}(k) \\
%  \end{array}
%\right]=
M(\sigma(k)) y_e(k),
\end{equation}
for every $k\geq 0$, where
\begin{eqnarray*}
% \nonumber to remove numbering (before each equation)
  \vec{e}(k+1) &=& A_P \vec{e}(k) + B_P \vec{u}_1(k) + B_P \delta_{0, \sigma(k)} \vec{v}(k)\\
\vec{e}(k) &=& \vec{e}(k)\\
&\vdots & \\
\vec{e}(k+1-d_{max}) &=& \vec{e}(k-(d_{max}-1))\\
\vec{u}_1(k+1) &=& \vec{u}_2(k) + \delta_{1, \sigma(k)} \vec{v}(k)\\
&\vdots & \\
\vec{u}_{d_{max}-1}(k+1) &=& \vec{u}_{d_{max}}(k) + \delta_{d_{max}-1, \sigma(k)} \vec{v}(k)\\
\vec{u}_{d_{max}}(k+1) &=& \delta_{d_{max}, \sigma(k)} \vec{v}(k).
\end{eqnarray*}
%\begin{align*}
%
%\end{align*}
Then, we note that the dynamics of $\vec{u}_1(k)$ are given by
\vskip -2mm
\begin{align*}
\vec{u}_1(k) =& \delta_{1, \sigma(k-1)} \vec{v}(k-1) + \delta_{2, \sigma(k-2)} \vec{v}(k-2) \\
&+ \ldots + \delta_{d_{max}, \sigma(k-d_{max})} \vec{v}(k-d_{max}),
\end{align*}
\normalsize
hence, by \eqref{eq-vt},
\small
\begin{align*}
\vec{e}(k+1) &= A_P \vec{e}(k) + B_P \sum\limits_{i=0}^{d_{max}} \delta_{i, \sigma(k-i)} \vec{v}(k-i) \notag\\
&=  A_P \vec{e}(k) + B_P \sum\limits_{i=0}^{d_{max}} \delta_{i, \sigma(k-i)} \sum\limits_{j=0}^{d_{max}} K_{j} \vec{e}(k-i-j)
\end{align*}
\normalsize
Then the conclusion follows.
%where
%\begin{equation*}
%N_{0 0} = \left[
%            \begin{array}{lll}
%              A_P + B_P K_1 & B_P K_2 & 0 \\
%              \mathbb I_{n} & 0 & 0 \\
%              0 & \mathbb I_{n} & 0 \\
%            \end{array}
%          \right],
%N_{0 1} = \left[
%            \begin{array}{lll}
%              A_P + B_P K_1 & B_P (K_1+K_2) & B_P K_2 \\
%              \mathbb I_{n} & 0 & 0 \\
%              0 & \mathbb I_{n} & 0 \\
%            \end{array}
%          \right],
%\end{equation*}
%\begin{equation*}
%N_{1 0} = \left[
%            \begin{array}{lll}
%              a & 0 & 0 \\
%              \mathbb I_{n} & 0 & 0 \\
%              0 & \mathbb I_{n} & 0 \\
%            \end{array}
%          \right],
%N_{1 1} = \left[
%            \begin{array}{lll}
%              A_P & B_P K_1 & B_P K_2 \\
%              \mathbb I_{n} & 0 & 0 \\
%              0 & \mathbb I_{n} & 0 \\
%            \end{array}
%          \right]
%\end{equation*}
\end{proof}

\begin{remark} It is clear that not all products of matrices in $\cN$ are allowed. More precisely, if at time $k$ the switching matrix defining the system dynamics is given by $N_{ \bar\sigma_k, \ldots, \bar\sigma_{k-d_{max}}}$, then at time $k+1$ the set of allowed matrices is given by the set $\{ N_{ \sigma(k+1), \bar\sigma_k, \ldots, \bar\sigma_{k-d_{max+1}}} : \bar\sigma(k+1) \in D \}$ with cardinality $|D|$. In Figure~\ref{fig:Adj} we represent, for instance, the admissible left products  for $D = \{0,\ 1\}$ by means of edges of a graph whose nodes correspond to matrices in $\cN$. For example the edge from node $N_{10}$ to node $N_{11}$ represents the product $N_{11}N_{10}$. Note that the outdegree (i.e. the number of outgoing edges) of each vertex is equal to $2 = |\{0,1\}|$: it is easy to see that, in the general case, the outdegree of each vertex is equal to $|D|$.

We observe that in the continuous time case the idea of admissible transitions between subsystems goes under the name of \emph{constrained switching} and has been studied, for example, in \cite{Souza2014}.

\begin{figure}[ht]
\begin{center}
\begin{tikzpicture}[->,>=stealth',shorten >=1pt,auto,node distance=2cm,
                    thick,main node/.style={circle,draw,font=\sffamily\Large\bfseries}]

  \node[main node] (1) {$N_{00}$};
  \node[main node] (2) [right of=1] {$N_{01}$};
  \node[main node] (3) [below of=1] {$N_{10}$};
  \node[main node] (4) [below of=2] {$N_{11}$};

  \path[every node/.style={font=\sffamily\small}]
    (1) edge [bend right] node[right] {} (3)
        edge [loop left] node {} (1)
    (2) edge [bend right] node [right] {} (1)
        edge node [left] {} (3)
        %edge [loop left] node {0.4} (2)
        %edge [bend right] node[left] {0.1} (3)
    (3) edge node [right] {} (2)
        edge [bend right] node[right] {} (4)
    (4) edge [bend right] node [left] {} (2)
        edge [loop right] node {} (4);
        %edge [bend right] node[right] {0.2} (1);
\end{tikzpicture}
\caption{Admissible products graph.}\label{fig:Adj}
\end{center}
\end{figure}
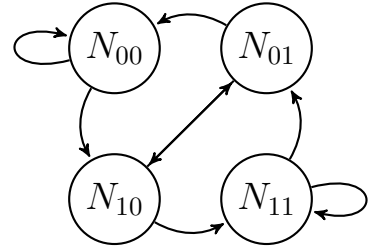
\end{remark}

Given $\vec\varepsilon_e(k+1)= N_{\sigma(k), \ldots, \sigma(k-d_{max})}\vec\varepsilon_e(k)$, $\forall k\geq 0$, then we see immediately that $\sum\limits_{k=0}^{\infty} \|\vec{e}(k)\|_2^2 = \sum\limits_{k=0}^{\infty} \frac{\|\vec\varepsilon_e(k)\|_2^2}{2d_{max}+1}$.

The main result of this paper can be summarized as follows

\begin{proposition}\label{MainProp}
Given \eqref{sdsystem} and the control law \eqref{eq-vt},

There $\exists\ C > 0$ such that

\scriptsize
\begin{align}\label{eq:MainProp}
% \nonumber to remove numbering (before each equation)
  &\min_{K}\mathcal J \leq  \frac{1}{2d_{max}+1} \min_{K} \left(\max_{\sigma_{[0,\tau]}\in\Sigma}\sum\limits_{k=0}^{\tau} \|\vec\varepsilon_e(k)\|^2_2+ \right.\notag\\
  &\left.\sum_{k=\tau+1}^{\eta} \alpha^{k} \max_{{\overline{\vec{v}}}\in\overline V_k} \parallel {\overline{\vec{v}}} \parallel_{2}^{2}+C^2 \rho(\overline{\cN})^{2\eta+2} \frac{1}{1-\rho(\overline{\cN})^{2}} \left\|\vec\varepsilon_e(1)\right\|_{2}^{2}\right)
\end{align}
\normalsize
where $\sigma_{[a,b]}$ is the string of switching signals contained in $\sigma$ from position $a \in \mathbb{N}_0$ to $b \in \mathbb{N}_0$, $\overline V_k$, $k\geq 0$ are properly constructed polytopes, and $\rho(\overline{\cN})$ is the Joint Spectral Radius of the set of matrices obtained as a special lifting of the set $\cN$.
\end{proposition}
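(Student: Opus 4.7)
The starting point is the identity already noted in the text,
$\sum_{k\ge 0}\|\vec{e}(k)\|_2^2 = \tfrac{1}{2d_{\max}+1}\sum_{k\ge 0}\|\vec\varepsilon_e(k)\|_2^2$, which follows from the stacking structure of $\vec\varepsilon_e$ together with the assumption $\vec{e}(k)=\mathbf 0$ for $k<0$. Pulling the factor $(2d_{\max}+1)^{-1}$ outside and commuting $\min_K$ past the constant, it suffices to bound $\max_\sigma \sum_{k\ge 0}\|\vec\varepsilon_e(k)\|_2^2$ for each fixed $K$. I would split this sum at the two prescribed cut-offs $\tau<\eta$ into a short initial regime, a polytope-based intermediate regime, and a geometric tail, bound each piece separately, and then recombine under the outer $\min_K$.

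For the short regime $0\le k\le\tau$, Lemma~\ref{lemma_1} gives $\vec\varepsilon_e(k) = N_{\sigma(k-1),\ldots,\sigma(k-1-d_{\max})}\cdots N_{\sigma(0),\ldots,\sigma(-d_{\max})}\,\vec\varepsilon_e(0)$, so each summand depends only on the finite substring $\sigma_{[0,\tau]}$. Consequently $\max_{\sigma\in\Sigma}\sum_{k=0}^{\tau}\|\vec\varepsilon_e(k)\|_2^2 = \max_{\sigma_{[0,\tau]}\in\Sigma}\sum_{k=0}^{\tau}\|\vec\varepsilon_e(k)\|_2^2$, which is exactly the first bracketed term and is computable by enumerating all admissible switching strings of length $\tau+1$ along the edges of the graph in Figure~\ref{fig:Adj}.

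For the intermediate regime $\tau+1\le k\le \eta$, I would construct inductively a family of polytopes $\overline V_k$ containing every reachable $\vec\varepsilon_e(k)$ under admissible switching, by pushing $\overline V_{k-1}$ through each of the $|D|$ allowed transitions and taking the convex hull of the resulting vertices. Choosing $\alpha$ large enough that $\alpha^{k}\max_{\overline v\in\overline V_k}\|\overline v\|_2^2$ dominates $\|\vec\varepsilon_e(k)\|_2^2$ for every admissible $\sigma$, and summing from $\tau+1$ to $\eta$, recovers the second bracketed term.

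For the geometric tail $k\ge\eta+1$, I would pass to the Dai--Kozyakin lifted set $\overline{\cN}$ so that every constrained product of matrices from $\cN$ becomes an unconstrained product of matrices from $\overline{\cN}$ with the same JSR $\rho(\overline\cN)$. Invoking the extremal polytope norm $\|\cdot\|_P$ produced by the polytope algorithm cited in the paper, one has $\|F\|_P\le \rho(\overline\cN)$ for every $F\in\overline\cN$, and by induction $\|\vec\varepsilon_e(k)\|_P\le \rho(\overline\cN)^{k-1}\|\vec\varepsilon_e(1)\|_P$. Using both directions of the equivalence of the polytope and Euclidean norms on a finite-dimensional space, collected into a single constant $C>0$, this gives $\|\vec\varepsilon_e(k)\|_2^2\le C^2\rho(\overline\cN)^{2(k-1)}\|\vec\varepsilon_e(1)\|_2^2$; summing the geometric series from $\eta+1$ to $\infty$ yields $C^2\rho(\overline\cN)^{2\eta+2}(1-\rho(\overline\cN)^2)^{-1}\|\vec\varepsilon_e(1)\|_2^2$, which is the third term. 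Adding the three bounds and dividing by $2d_{\max}+1$ gives \eqref{eq:MainProp}.

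The main obstacle is the intermediate regime: one must verify that the iteratively built polytopes $\overline V_k$ remain finitely representable even as they enumerate all admissible length-$k$ switching strings, and simultaneously calibrate $\alpha$ so that the $\alpha^k$ weight compensates the vertex blow-up while still dominating $\|\vec\varepsilon_e(k)\|_2^2$. The short and tail regimes are comparatively routine once Lemma~\ref{lemma_1} and the lifting of $\cN$ to $\overline{\cN}$ are in place.
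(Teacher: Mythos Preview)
Your three-regime decomposition (exact enumeration on $[0,\tau]$, polytope bound on $[\tau+1,\eta]$, geometric tail via the extremal polytope norm for the lifted set $\overline{\cN}$) is exactly the structure of the paper's argument, which simply assembles Lemmas~\ref{lemma_1}--\ref{lemma_5}. The short and tail regimes are handled correctly and match the paper.

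The one place where your account diverges from the paper, and where you yourself flag ``the main obstacle'', is the intermediate regime. You describe $\overline V_k$ as the convex hull of all reachable $\vec\varepsilon_e(k)$, and then speak of choosing $\alpha$ as a separate weight so that $\alpha^k\max_{\overline v\in\overline V_k}\|\overline v\|_2^2$ dominates the true trajectory. In the paper the two ingredients are coupled: one first fixes a scalar $\alpha$ and works with the \emph{rescaled} lifted family $\widetilde{\cN}=\overline{\cN}/\alpha$, whose JSR exceeds $1$; the polytopes $\overline V_k$ are then generated by applying length-$k$ products of $\widetilde{\cN}$ to the lifted initial vector. Because $\rho(\widetilde{\cN})>1$, the successive polytopes are nested outward, so any vertex that falls strictly inside $\overline V_k$ can be discarded without affecting later hulls. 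This is precisely the mechanism that keeps the vertex count bounded (the paper reports $\sim 150$ vertices after 50 steps versus $4^{50}$ possible products), and the factor $\alpha^k$ in \eqref{eq:MainProp} then re-enters simply because $P_k=\alpha^k\widetilde P_k$. Your version leaves this pruning mechanism unexplained, which is why the finiteness issue remains open in your last paragraph; once you insert the scaling $\widetilde{\cN}=\overline{\cN}/\alpha$ the obstacle dissolves and your proof coincides with the paper's.
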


To prove this Proposition we need a few intermediate results.

\begin{lemma}\label{lemma_2}
Given the function $\mathcal J$, defined in Problem \ref{pb:MinErr},

Then
\begin{equation}\label{eq:MinNorm}
  \min_{K}\mathcal J \equiv \frac{1}{2d_{max}+1}\min_{K} \max_{\sigma \in \Sigma} \sum_{k=0}^{\infty}\| P_{k}(\sigma) \vec{\varepsilon}_e(0) \|_{2}^{2}
\end{equation}
where $P_{k}(\sigma)$ is the product of matrices in $\cN$ associated with the first $k$ values in $\sigma$ for $k\geq 0$,  and $P_{0}(\sigma)=\mathbb I_n$ is the identity matrix.
\end{lemma}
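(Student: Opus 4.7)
The plan is to combine Lemma \ref{lemma_1} with a simple bookkeeping argument that relates $\|\vec\varepsilon_e(k)\|_2^2$ to $\|\vec{e}(k)\|_2^2$, and then unroll the recursion.

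First, I would unroll the dynamics from Lemma \ref{lemma_1}. Since $\vec\varepsilon_e(k+1) = N_{\sigma(k), \ldots, \sigma(k-d_{max})}\vec\varepsilon_e(k)$ for every $k \geq 0$, a direct induction on $k$ yields $\vec\varepsilon_e(k) = P_k(\sigma)\vec\varepsilon_e(0)$, where $P_k(\sigma)$ is the left product of the first $k$ matrices drawn from $\mathcal N$ according to the switching signal $\sigma$, with $P_0(\sigma) = \mathbb{I}$ by convention. This converts the maximization over trajectories into a maximization over finite products of matrices in $\mathcal N$ applied to the (fixed) initial state $\vec\varepsilon_e(0)$.

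Second, I would prove the combinatorial identity $\sum_{k=0}^\infty \|\vec\varepsilon_e(k)\|_2^2 = (2d_{max}+1)\sum_{k=0}^\infty \|\vec{e}(k)\|_2^2$ that is asserted just before the Proposition statement. By definition,
\begin{equation*}
\|\vec\varepsilon_e(k)\|_2^2 = \sum_{j=0}^{2d_{max}} \|\vec{e}(k-j)\|_2^2,
\end{equation*}
so, interchanging the order of summation and setting $l=k-j$,
\begin{equation*}
\sum_{k=0}^\infty \|\vec\varepsilon_e(k)\|_2^2 = \sum_{l=-2d_{max}}^{\infty} c(l)\, \|\vec{e}(l)\|_2^2,
\end{equation*}
where $c(l)$ counts pairs $(k,j)$ with $k\geq 0$, $0\leq j \leq 2d_{max}$ and $k-j=l$. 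For $l\geq 0$ we get $c(l) = 2d_{max}+1$, and for $-2d_{max}\leq l < 0$ we get $c(l) = 2d_{max}+1+l$; however the initial condition $\vec{e}(l)=\mathbf{0}$ for $l<0$ kills those terms. Hence only the full multiplicity $2d_{max}+1$ survives and the claimed identity follows.

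Third, I would chain the two steps together: substituting the identity and the explicit expression $\vec\varepsilon_e(k) = P_k(\sigma)\vec\varepsilon_e(0)$ into the definition of $\mathcal J$ gives
\begin{equation*}
\mathcal J = \max_{\sigma\in\Sigma}\sum_{k=0}^{\infty}\|\vec{e}(k)\|_2^2 = \frac{1}{2d_{max}+1}\max_{\sigma\in\Sigma}\sum_{k=0}^{\infty}\|P_k(\sigma)\vec\varepsilon_e(0)\|_2^2,
\end{equation*}
and taking the minimum over the controller gain $K$ on both sides yields \eqref{eq:MinNorm}. The argument is largely accounting; the only subtle step is making sure that the multiplicities at the boundary ($l<0$) do not spoil the clean factor $2d_{max}+1$, which is exactly where the assumption $\vec{e}(l)=\mathbf{0}$ for $l<0$ is used.
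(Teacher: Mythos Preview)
Your proof is correct and follows the same approach as the paper: use the identity $\sum_k\|\vec\varepsilon_e(k)\|_2^2=(2d_{max}+1)\sum_k\|\vec e(k)\|_2^2$ together with Lemma~\ref{lemma_1} to rewrite $\mathcal J$ in terms of the products $P_k(\sigma)$. The paper merely asserts that identity (just before Proposition~\ref{MainProp}) and then invokes Lemma~\ref{lemma_1}, whereas you additionally supply the bookkeeping argument---including the boundary check via $\vec e(l)=\mathbf 0$ for $l<0$---that justifies it.
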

\begin{proof}
From the definition of $\mathcal J$ and the extended error vectors $\vec{\varepsilon}_e(k)$ it follows
\small
\begin{equation}\label{eq:MinErr2}
\min_{K}\mathcal J =\frac{1}{2d_{max}+1}\min_{K}\max\limits_{\sigma \in \Sigma} \sum\limits_{k=0}^{\infty} \|\vec\varepsilon_e(k)\|_2^2.
\end{equation}
\normalsize
Hence, from Lemma \ref{lemma_1}, \eqref{eq:MinNorm} follows directly.
\end{proof}

\begin{remark} Considering that the cardinality of $\Sigma$ is infinite, we can make the computational complexity feasible by computing upper bounds for \eqref{eq:MinNorm} using the Joint Spectral Radius theory. In particular it holds true that
\begin{equation}\label{eq:JSRlimNorm}
\rho(\mathcal M)=\lim_{k\rightarrow \infty} \sup_{P\in \P_k(\mathcal M)} \|P\|^{1/k}
\end{equation}
for any matrix norm $\|\cdot\|$, where $\P_k(\mathcal M)$ is the set of all products of length $k$ of matrices in $\mathcal M$.
Furthermore
\begin{equation}\label{eq:JSRnorm}
    \rho(\mathcal M)=\inf_{\|\cdot\|\in {\mathcal O}}\sup_{A\in {\mathcal M}}\|A\|
\end{equation}
where ${\mathcal O}$ denotes the set of all possible operator norms. We observe that, if the set $\mathcal M\in\R^{d \times d}$ is irreducible, which means that only the trivial subspaces $\{0\}$ and $\R^{d \times d}$ are invariant under the action all the matrices in $\mathcal M$, then the inf in \eqref{eq:JSRnorm} is actually achieved for some induced norm $\|\cdot\|_*$ which is said to be extremal for the set $\mathcal M$ \cite{Bara88}.

In the problem under study, however, not all the products of matrices in $\cN$ are allowed, as explained above. Admissible ones induce a Markovian structure \cite{kozyakin2014berger} of the switching sequences. For instance, in the case of $D = \{0,\ 1\}$ the admissible left products can be represented as the paths of the graph plotted in Figure \ref{fig:Adj}. Recently it has been shown that the maximal rate of growth among all and only the admissible products of matrices from a finite set $\mathcal M$, which is called the Constrained Joint Spectral Radius (CJSR), can be computed through the evaluation of the classical JSR of a proper lifting of the set $\mathcal M$ \cite{dai2014robust,kozyakin2014berger}.
\end{remark}

To make use of the JSR theory in order to produce upper bounds for \eqref{eq:MinNorm} we need first the following result:

\begin{lemma}\label{lemma_3} Given an induced polytope norm $\|\bullet \|_P$ for a finite dimensional space $S$, whose unit ball is the polytope $P$, then $\|A\|_2\leq C \|A \|_P$ for any $A\in S$, where $C\leq R(P)R(P^*)$ and $R(P)$, $R(P^*)$ are the circumradii of $P$ and its dual, $P^*$, respectively.
\end{lemma}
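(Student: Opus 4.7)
The plan is to compare the two operator norms by using the geometric containments between the unit balls $P$, $P^{\ast}$, and the Euclidean unit ball $B_{2}$, and then to combine them multiplicatively through the definition of the operator norm. The key observation is that the circumradius $R(P)$ encodes exactly the best constant in the vector inequality $\|x\|_{2}\le R(P)\|x\|_{P}$, while the circumradius of the dual polytope controls the reverse comparison through the bipolar formulation of $\|\cdot\|_{P}$.

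First I would record the two one-sided vector inequalities. By definition of circumradius, $P\subseteq R(P)B_{2}$, and since $P$ is the unit ball of $\|\cdot\|_{P}$, this containment is equivalent to
\begin{equation*}
\|x\|_{2}\le R(P)\,\|x\|_{P}\qquad\text{for all } x\in\mathbb{R}^{d}.
\end{equation*}
For the opposite comparison I would invoke polar duality. Because $P$ is a centrally symmetric polytope, one has the representation
\begin{equation*}
\|x\|_{P}=\sup_{y\in P^{\ast}}\langle x,y\rangle,
\end{equation*}
and the inclusion $P^{\ast}\subseteq R(P^{\ast})B_{2}$ immediately gives
\begin{equation*}
\|x\|_{P}\le\sup_{y\in R(P^{\ast})B_{2}}\langle x,y\rangle=R(P^{\ast})\,\|x\|_{2}.
\end{equation*}

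Next I would chain these two inequalities through the induced operator norms. For any $A\in S$ and any $x$ with $\|x\|_{2}\le 1$, the first inequality yields $\|Ax\|_{2}\le R(P)\|Ax\|_{P}$; the submultiplicativity of the induced polytope norm then gives $\|Ax\|_{P}\le \|A\|_{P}\|x\|_{P}$; finally the second inequality provides $\|x\|_{P}\le R(P^{\ast})\|x\|_{2}\le R(P^{\ast})$. Concatenating,
\begin{equation*}
\|Ax\|_{2}\le R(P)R(P^{\ast})\,\|A\|_{P},
\end{equation*}
and taking the supremum over $\|x\|_{2}\le 1$ produces $\|A\|_{2}\le R(P)R(P^{\ast})\|A\|_{P}$, so the claim holds with $C=R(P)R(P^{\ast})$.

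There is no real obstacle in this argument — it is essentially a Banach–Mazur style comparison of norms whose unit balls are polar to each other. The only point that requires a moment of care is the use of the bipolar representation $\|x\|_{P}=\sup_{y\in P^{\ast}}\langle x,y\rangle$, which relies on $P$ being a closed, centrally symmetric convex body containing the origin in its interior; these hypotheses are automatic for the polytopes constructed by the algorithm in the previous sections, so the lemma applies directly to the setting needed for Proposition~\ref{MainProp}.
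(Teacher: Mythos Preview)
Your argument is correct and arrives at the same bound as the paper. The paper's own proof first records the coarser estimate $C\le R(P)/r(P)$ coming from the inscribed and circumscribed Euclidean balls of $P$, and then invokes \cite[Theorem~1.2]{gritzmann1992inner} to replace $1/r(P)$ by $R(P^{\ast})$. You bypass the inradius altogether by reading the inequality $\|x\|_{P}\le R(P^{\ast})\|x\|_{2}$ directly off the bipolar representation $\|x\|_{P}=\sup_{y\in P^{\ast}}\langle x,y\rangle$; this is a more self-contained route to the same identity $1/r(P)=R(P^{\ast})$, and the subsequent chaining through the induced operator norm is exactly what the paper leaves implicit in the phrase ``it is easy to check''.
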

\begin{proof} Given the finite dimensional space $S$, the induced polytope norm $\|\bullet \|_P$ and the polytopic unit ball $P$, by the equivalence of norms in finite dimensional spaces, there exists a constant of equivalence $C$ such that $\|A\|_2\leq C \|A \|_P$ for any $A\in S$. Furthermore it is easy to check that $C\leq \frac{R(P)}{r(P)}$, where $R(P)$ and $r(P)$ are the circumradius and inradius of $P$, respectively. By \cite[Theorem 1.2]{gritzmann1992inner}, it follows that $C\leq R(P)R(P^*)$.
\end{proof}
\begin{remark}
 There are many well studied algorithms that allow to build the dual $P^*$ of a polytope $P$, like for instance the one described in \cite{avis2000revised}. Furthermore we point out that the circumradius of a polytope $P$ and its dual, $P^*$, can be evaluated easily as the maximum length of vertices on the boundary of $P$ and $P^*$.
\end{remark}

\begin{lemma}\label{lemma_4} Consider \eqref{eq:MinNorm} and a set of matrices $\overline{\cN}$ produced as a lifting of the matrices in $\cN$, as showed by means of an example in the next Section. Assuming that the set $\overline{\cN}$ is irreducible and it does admit an extremal polytope norm $\|\bullet\|_*$ such that $\|\overline N\|_*\leq \rho(\overline{\cN})$ for any $\overline N\in\overline{\cN}$,

Then
\scriptsize
\begin{equation}\label{eq:MinNormUB}
\min_{K}\mathcal J\leq \frac{1}{2d_{max}+1}\min_{K} \left(C^2\sum_{k=1}^{\infty} \rho(\overline{\cN})^{2k} \left\|\vec\varepsilon_e(0)\right\|_{2}^{2}+\left\|\vec{\varepsilon}_e(0)\right\|_{2}^{2}\right)
\end{equation}
\normalsize
where $C$ is the constant of equivalence such that $\|A\|_2\leq C \|A\|_*$, for any $A\in\R^{\left(n (2d_{max}+1)|D|^{d_{max}}\right)^2}$, whose existence is guaranteed by the equivalence of norms in finite dimensional spaces.
\end{lemma}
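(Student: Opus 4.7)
The plan is to start from the equivalent formulation provided by Lemma~\ref{lemma_2}, which rewrites $\min_K \mathcal{J}$ as $\frac{1}{2d_{max}+1}\min_{K}\max_{\sigma \in \Sigma}\sum_{k=0}^{\infty}\|P_{k}(\sigma)\vec{\varepsilon}_e(0)\|_{2}^{2}$, and then to control each term of the series uniformly in $\sigma$ using the extremal polytope norm $\|\cdot\|_*$ on the lifted set $\overline{\cN}$. First I would isolate the $k=0$ contribution: since $P_{0}(\sigma)=\mathbb{I}$, this term is simply $\|\vec{\varepsilon}_e(0)\|_{2}^{2}$ for every $\sigma\in\Sigma$, and it accounts for the isolated $\|\vec{\varepsilon}_e(0)\|_2^2$ summand in \eqref{eq:MinNormUB}.

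Next, for $k\ge 1$, I would exploit the admissibility constraint on the products in $\cN$ together with the Dai--Kozyakin lifting recalled in the previous remark: every admissible length--$k$ product $P_{k}(\sigma)$ is isometrically represented (as a block action) by a length--$k$ product $\overline{P}_{k}(\sigma)=\overline{N}_{i_k}\cdots\overline{N}_{i_1}$ of matrices in $\overline{\cN}$, acting on a lifted initial datum whose $\|\cdot\|_2$ coincides with $\|\vec{\varepsilon}_e(0)\|_2$ (the lifting embeds the original state into one block and fills the rest with zeros). By the hypothesis that $\|\cdot\|_*$ is extremal with $\|\overline{N}\|_*\le\rho(\overline{\cN})$ for every $\overline{N}\in\overline{\cN}$, submultiplicativity yields
\begin{equation*}
\|\overline{P}_{k}(\sigma)\|_{*}\le\prod_{j=1}^{k}\|\overline{N}_{i_j}\|_{*}\le\rho(\overline{\cN})^{k}.
\end{equation*}

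Then I would transfer this bound to the Euclidean norm via the equivalence constant $C$ guaranteed by Lemma~\ref{lemma_3}, obtaining $\|\overline{P}_{k}(\sigma)\|_{2}\le C\,\rho(\overline{\cN})^{k}$, hence
\begin{equation*}
\|P_{k}(\sigma)\vec{\varepsilon}_e(0)\|_{2}\le C\,\rho(\overline{\cN})^{k}\,\|\vec{\varepsilon}_e(0)\|_{2}\qquad\forall\,k\ge 1,\ \forall\,\sigma\in\Sigma.
\end{equation*}
Squaring and summing in $k\ge 1$ gives an upper bound that is independent of $\sigma$, so the $\max_{\sigma\in\Sigma}$ in Lemma~\ref{lemma_2} is dominated by it; combining with the isolated $k=0$ term and taking $\min_{K}$ on both sides produces exactly the inequality \eqref{eq:MinNormUB}. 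Implicitly one also needs $\rho(\overline{\cN})<1$ so that the geometric tail $\sum_{k\ge 1}\rho(\overline{\cN})^{2k}$ converges; otherwise the bound is vacuous.

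The main obstacle I anticipate is the passage from the constrained product $P_{k}(\sigma)$ in the original $\cN$--space to the unconstrained lifted product $\overline{P}_{k}(\sigma)$ in the $\overline{\cN}$--space, since only the latter admits an extremal polytope norm in the classical (unconstrained) JSR sense. Making the block--diagonal relationship precise, and in particular verifying that (i) the lifting is a true semigroup homomorphism on admissible words and (ii) the norm of the projected action on $\vec{\varepsilon}_e(0)$ is dominated by the norm of the lifted action on the embedded datum, is the delicate technical point; once this correspondence is established the remaining estimates reduce to submultiplicativity, extremality of $\|\cdot\|_*$, and the norm--equivalence constant from Lemma~\ref{lemma_3}.
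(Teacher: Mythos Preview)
Your proposal is correct and follows essentially the same route as the paper's own proof: isolate the $k=0$ term from the Lemma~\ref{lemma_2} expression, pass from the constrained products $P_k(\sigma)$ to unconstrained products $\overline{P}_k(\sigma)$ in the lifted family $\overline{\cN}$ via the Dai--Kozyakin construction (using that the block structure preserves the Euclidean norm of both the operator and the embedded initial datum), bound $\|\overline{P}_k(\sigma)\|_*\le\rho(\overline{\cN})^k$ by extremality and submultiplicativity, and finally transfer to the $2$--norm through the constant $C$ of Lemma~\ref{lemma_3}. The technical obstacle you flag --- making precise that the lifted product has a single nonzero block equal to $P_k(\sigma)$ and that a suitable lifted initial vector $\overline{\vec\varepsilon}_e(0)^{(h)}$ gives $\|\overline{P}_k(\sigma)\overline{\vec\varepsilon}_e(0)^{(h)}\|_2=\|P_k(\sigma)\vec\varepsilon_e(0)\|_2$ --- is exactly the point the paper spells out first in its argument.
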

\begin{proof}
Following the ideas presented in \cite{kozyakin2014berger}, we consider a new set $\overline{\cN}$ containing $|D|^{d_{max}+1}$ matrices of size $\left(n (2d_{max}+1)|D|^{d_{max}} \right)^2$ which are given by a special lifting of the matrices in $\cN$. The two sets are such that the JSR of $\overline{\cN}$ equals the CJSR of the set $\cN$. In the next Section we give an example of such a lifting. Here we observe that, due to the structure of the matrices in $\overline{\cN}$, each product $\overline P_{k}(\sigma)$, associated with the first $k\geq 0$ values in $\sigma\in\Sigma$, does have only one non--zero block of size $\left(n (2d_{max}+1)\right)^2$. Therefore, given any switching sequence $\sigma\in\Sigma$, it follows that $\left\| \overline{P}_{k}(\sigma)\right\|_{2}^{2} = \left\| {P}_{k}(\sigma) \right\|_{2}^{2}$. Furthermore, always due to the structure of matrices in $\overline{\cN}$, we can consider $|D|^{d_{max}}$ lifted vectors $\overline{\vec\varepsilon}_e(0)^{(h)}$, $h\in\{1,\,\ldots,\ |D|^{d_{max}}\}$, such that each of them contains $|D|^{d_{max}}$ blocks of size $n (2d_{max}+1) \times 1$ which are all zero except one that equals $\vec\varepsilon_e(0)$. As a consequence $\left\|\overline{\vec\varepsilon}_e(0)^{(h)}\right\|_{2}^{2} =\left\|\vec{\varepsilon}_e(0) \right\|_{2}^{2}$ for every  $h\in\{1,\,\ldots,\ |D|^{d_{max}}\}$, and, for each product $\overline P_{k}(\sigma)$, there is a $h\in\{1,\,\ldots,\ |D|^{d_{max}}\}$ such that $\left\| \overline{P}_{k}(\sigma) \overline{\vec\varepsilon}_e(0)^{(h)}\right\|_{2}^{2} = \left\| {P}_{k}(\sigma)\vec\varepsilon_e(0)\right\|_{2}^{2}$.
Whence \eqref{eq:MinNorm} can be rewritten as
\scriptsize
\begin{align}\label{eq:MinNormBar}
&(2d_{max}+1)\min_{K}\mathcal J =\min_{K} \max_{\sigma \in \Sigma} \left(\sum_{k=1}^{\infty}\left\| \overline{P}_{k}(\sigma) \overline{\vec\varepsilon}_e(0) \right\|_{2}^{2}+\left\|\vec{\varepsilon}_e(0)\right\|_{2}^{2}\right)\notag\\
&\leq \min_{K} \max_{\sigma \in \Sigma} \left(\sum_{k=1}^{\infty}\left\| \overline{P}_{k}(\sigma)\right\|_{2}^{2} \left\|\vec\varepsilon_e(0) \right\|_{2}^{2} + \left\| \vec{\varepsilon}_e(0) \right\|_{2}^{2}\right)
\end{align}
\normalsize
%where  $\overline P_{k}(\sigma)$ is the product of matrices in $\overline\cN$ associated with the first $k$ values in $\sigma$, for $k\geq 0$.

By assumption we have an extremal polytope norm, so by Lemma \ref{lemma_3} and the submultiplicativity of the 2--norm we get
\eqref{eq:MinNormUB} where $C$ can be derived as explained in Lemma \ref{lemma_3}.
\end{proof}

\begin{remark}
In general the set $\overline{\cN}$ is irreducible, so we can look for a polytopic extremal norm $\|\cdot\|_*$ such that, for any $\overline N\in\overline{\cN}$, $\|\overline N\|_*\leq \rho(\overline{\cN})$. This can be, in theory, an extremely hard problem. However, in general, we can build an extremal polytope norm following the algorithms described in \cite{cicone2010finiteness} and \cite{Guglielmi2013Exact}.
\end{remark}

\begin{remark}
A necessary condition to have a finite $\|\vec\varepsilon_e\|^2_{\mathcal L_2}$ is that the steady state of the extended error signal $\vec\varepsilon_e(k)$ is zero, which is equivalent to require that the joint spectral radius $\rho(\overline{\cN})$ is strictly smaller than 1.
\end{remark}

Based on this observation we can formulate the last Lemma

\begin{lemma}\label{lemma_5} Given \eqref{eq:MinNorm}, the matrices $A_P,\ B_P$ and assuming that there exists $K$ such that the corresponding set $\overline{\cN}$ has $\rho(\overline{\cN})<1$. Then
\scriptsize
\begin{align}\label{eq:J_UB3}
&\min_{K}\mathcal J\leq  \frac{1}{2d_{max}+1}\min_{K}\left(\max_{\sigma_{[0,\tau]}\in\Sigma}\sum\limits_{k=0}^{\tau} \|\vec\varepsilon_e(k)\|^2_2+\right.\notag\\
&\left.\sum_{k=\tau+1}^{\eta} \alpha^{k} \max_{\overline{\vec{v}}\in\overline V_k} \parallel \overline{\vec{v}} \parallel_{2}^{2}+C^2 \rho(\overline{\cN})^{2\eta+2} \frac{1}{1-\rho(\overline{\cN})^{2}} \left\|\vec\varepsilon_e(0)\right\|_{2}^{2}\right)
\end{align}
\normalsize
where $\overline V_k$ are polytopes computed applying products of matrices in $\overline \cN$ to the initial vector $\vec{\varepsilon}_e(0)$, and $\alpha>\rho(\overline{\cN})$.
\end{lemma}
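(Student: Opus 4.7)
The strategy is to refine Lemma~\ref{lemma_4} by splitting the inner infinite sum in Lemma~\ref{lemma_2} into three regimes (short transient, intermediate, tail) and bounding each with a tool adapted to that regime. Starting from
$$(2d_{max}+1)\min_K \mathcal J \;=\; \min_K \max_{\sigma\in\Sigma} \sum_{k=0}^{\infty}\|\vec\varepsilon_e(k)\|_2^2,$$
I would decompose the inner sum as $\sum_{k=0}^{\tau} + \sum_{k=\tau+1}^{\eta} + \sum_{k=\eta+1}^{\infty}$ and upper bound the max of the three-term sum by the sum of the partial maxima, obtaining three terms to be treated independently.

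For the short block $0\leq k\leq \tau$, the vector $\vec\varepsilon_e(k)$ depends only on the initial segment $\sigma(0),\ldots,\sigma(k-1)$ of the switching signal and on the fixed initial condition $\vec\varepsilon_e(0)$; hence the max over $\sigma\in\Sigma$ reduces to a max over admissible finite segments $\sigma_{[0,\tau]}$ (admissibility being dictated by the Markovian graph of Figure~\ref{fig:Adj}), which is the first term of \eqref{eq:J_UB3} and is computable by enumerating all admissible products of length at most $\tau$. For the tail block $k\geq \eta+1$, I would re-run the argument of Lemma~\ref{lemma_4} starting the summation at $k=\eta+1$ rather than $k=1$: passing to the lifting $\overline{\cN}$, invoking the extremal polytope norm $\|\cdot\|_*$ guaranteed by irreducibility (with $\|\overline N\|_*\leq \rho(\overline{\cN})$ for every $\overline N\in\overline{\cN}$), using the Lemma~\ref{lemma_3} equivalence constant $C$, and summing the geometric series $\sum_{k=\eta+1}^{\infty} C^2\rho(\overline{\cN})^{2k}\|\vec\varepsilon_e(0)\|_2^2$, which converges thanks to the standing assumption $\rho(\overline{\cN})<1$ and yields exactly the third term of \eqref{eq:J_UB3}.

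For the intermediate block $\tau+1\leq k\leq \eta$ the idea is to replace the crude geometric tail estimate by a sharper polytope estimate. I would construct the polytopes $\overline V_k$ in the lifted space recursively, renormalizing by $\alpha^{-1/2}$ at each step: set $\overline V_\tau$ to be the finite convex hull of all $\alpha^{-\tau/2}\,\overline P_\tau(\sigma)\overline{\vec\varepsilon}_e(0)$ over admissible $\sigma$, and extend by $\overline V_{k+1}=\alpha^{-1/2}\,\mathrm{conv}\bigl\{\overline N \overline{\vec v} : \overline N\in\overline{\cN},\ \overline{\vec v}\in\overline V_k\bigr\}$. Since $\alpha>\rho(\overline{\cN})$ this sequence is uniformly bounded, and by construction every reachable state satisfies $\overline{\vec\varepsilon}_e(k)\in \alpha^{k/2}\overline V_k$, which gives $\|\overline{\vec\varepsilon}_e(k)\|_2^2 \leq \alpha^k \max_{\overline{\vec v}\in\overline V_k}\|\overline{\vec v}\|_2^2$. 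Using $\|\overline{\vec\varepsilon}_e(k)\|_2=\|\vec\varepsilon_e(k)\|_2$ (as in the proof of Lemma~\ref{lemma_4}, because of the block-diagonal structure of the Dai--Kozyakin lifting), this recovers the second term of \eqref{eq:J_UB3}. Summing the three bounds and taking $\min_K$ on both sides then produces \eqref{eq:J_UB3}.

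The main technical obstacle is making the recursive construction of the sequence $\{\overline V_k\}$ rigorous and justifying the $\alpha^{k/2}$ scaling in the lifted space: one must carefully track how the admissibility constraints encoded by Figure~\ref{fig:Adj} interact with the lifting $\overline{\cN}$, and explain precisely why the transient polytope bound is available at the cost of inflating the rate from $\rho(\overline{\cN})$ to any strictly larger $\alpha$. The short and tail bounds are essentially bookkeeping on top of Lemmas~\ref{lemma_2}--\ref{lemma_4}.
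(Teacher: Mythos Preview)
Your three-regime split, the enumeration for $0\le k\le\tau$, and the geometric tail via Lemma~\ref{lemma_4} are exactly what the paper does. The one substantive difference is in the intermediate block. You renormalize by $\alpha^{-1/2}$ with $\alpha>\rho(\overline{\cN})$ so that the polytope sequence $\{\overline V_k\}$ stays bounded; the paper instead takes $\widetilde{\cN}=\overline{\cN}/\alpha$ with $\alpha$ chosen so that $\rho(\widetilde{\cN})>1$, i.e.\ it deliberately makes the polytopes \emph{expand}. The payoff of the paper's choice (spelled out in the remark immediately following the lemma) is computational: once a vector lies in the interior of $\overline V_k$, monotonicity guarantees that all its descendants remain inside every later $\overline V_n$, so interior points can be discarded at each step and the vertex count stays small (empirically $\sim150$ in their examples, versus $4^k$). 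Your construction still produces a valid term-by-term upper bound, but without the expanding property you have no mechanism to prune, and $\overline V_k$ carries up to $|\overline{\cN}|^{\,k-\tau}$ vertices---which undermines the whole point of introducing the intermediate regime between brute force and the crude geometric tail. Note also that the hypothesis $\alpha>\rho(\overline{\cN})$ in the lemma statement is inconsistent with the condition $\rho(\widetilde{\cN})>1$ used in the paper's own proof; your reading follows the statement, while the computational rationale follows the proof.
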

\begin{proof} From \eqref{eq:MinNorm} and Lemma \ref{lemma_4} it follows that Problem~\ref{pb:MinErr} %, which is an equivalent formulation of Problem \ref{pb:MinErr},
admits the following upper bound
\scriptsize
\begin{align}\label{eq:J_UB2}
\max\limits_{\sigma \in \Sigma} \|\vec\varepsilon_e\|^2_{\mathcal L_2} =& \max_{\sigma \in \Sigma} \Big(\sum_{k=0}^{\tau-1}\| {P}_{k}(\sigma) \vec{\varepsilon}_e(0) \|_{2}^{2}+
\sum_{k=\tau}^{\eta-1}\| P_{k}(\sigma) \vec{\varepsilon}_e(0)\|_{2}^{2} \notag\\
&+\sum_{k=\eta}^{\infty}\| P_{k}(\sigma) \vec{\varepsilon}_e(0) \|_{2}^{2}\Big)\notag\\
\leq&  \Big(\max_{\sigma_{[0,\tau]}\in\Sigma}\sum\limits_{k=0}^{\tau} \|\vec\varepsilon_e(k)\|^2_2+ \max_{\sigma_{[\tau+1,\eta]}\in\Sigma}\sum\limits_{k=\tau+1}^{\eta} \|\vec\varepsilon_e(k)\|^2_2\notag\\
&+ C^2 \rho(\overline{\cN})^{2\eta+2} \frac{1}{1-\rho(\overline{\cN})^{2}} \left\|\vec\varepsilon_e(0)\right\|_{2}^{2}\Big).
\end{align}
\normalsize

For the terms from $\tau+1$ to $\eta$ we can construct the following approximation.

Given the set $\overline\cN$, which by assumption has JSR strictly less then 1, we consider the scaling $\widetilde{\cN}=\overline\cN/ \alpha$ such that $\rho(\widetilde{\cN})>1$. We can use, for instance, $\alpha= {\max_{\overline A\in\overline{\cN}}(\overline A)}/{\beta}$ for any $\beta > 1$.
Hence
\scriptsize
\begin{align}\label{eq:Poly}
&\max_{\sigma \in \Sigma} \sum_{k=0}^{\infty}\parallel \vec{\varepsilon}_e(k) \parallel_{2}^{2}= \max_{\sigma \in \Sigma} \sum_{k=0}^{\infty}\parallel \alpha^{k} \widetilde{ P}_{k}(\sigma)\overline{\vec\varepsilon}_e(0)^{(h)} \parallel_{2}^{2}\notag\\
&\leq \sum_{k=0}^{\infty}\max_{\widetilde{ P}_{k} \in \P_k(\widetilde{\cN})}\parallel \alpha^{k} \widetilde{ P}_{k}\overline{\vec\varepsilon}_e(0)^{(h)} \parallel_{2}^{2}\leq \sum_{k=0}^{\infty} \alpha^{k} \max_{\overline{\vec{v}}\in\overline V_k} \parallel \overline{\vec{v}} \parallel_{2}^{2}
\end{align}
\normalsize
where $\widetilde{P}_{0}$ is the identity matrix, $\overline V_k$ is the set of vertices of the polytope produced at step $k$ applying products of length $k$ of matrices in $\widetilde{\cN}$ to the vector $\overline{{\vec\varepsilon}}_e(0)^{(h)}$, and $h\in\{1,\,\ldots,\ |D|^{d_{max}}\}$ is such that $\left\| \overline{P}_{k}(\sigma) \overline{{\vec\varepsilon}}_e(0)^{(h)}\right\|_{2}^{2} = \left\| {P}_{k}(\sigma) {\vec\varepsilon}_e(0)\right\|_{2}^{2}$.

Therefore
\begin{align}
\max_{\sigma_{[0,\eta]}\in\Sigma}\sum\limits_{k=0}^{\eta} \|{\vec\varepsilon}_e(k)\|^2_2\leq &\max_{\sigma_{[0,\tau]}\in\Sigma}\sum\limits_{k=0}^{\tau} \|{\vec\varepsilon}_e(k)\|^2_2+\notag\\
&\sum_{k=\tau+1}^{\eta} \alpha^{k} \max_{\overline{\vec{v}}\in\overline V_k} \parallel \overline{\vec{v}} \parallel_{2}^{2}
\end{align}
and the conclusion follows.
\end{proof}

\begin{remark}
We observe that the penultimate inequality in \eqref{eq:Poly} is due to the selection, for each fixed $k$, of the maximum error norm instead of considering the maximum of the sum of all error norms over any possible sequence of switching signals $\sigma \in \Sigma$. We are overestimating the summation, but we do not need to find anymore the infinite sequence $\sigma$ of switching signals that maximize the cost function.
\end{remark}
\begin{remark}
The last inequality in \eqref{eq:Poly} follows from the properties of the polytope which is built applying to an initial vector $\overline{{\vec\varepsilon}}_e(0)^{(h)}$, $h\in\{1,\,\ldots,\ |D|^{d_{max}}\}$, all the products of matrices in $\widetilde{\cN}$, where $\rho\left(\widetilde{\cN}\right)>1$ due to the choice of the scaling value $\alpha$.
In particular, since $\rho\left(\widetilde{\cN}\right)>1$, whenever the produced vectors span the entire space, which is true in general, then some of them will tend to grow, in the limit, as fast as $\rho\left(\widetilde{\cN}\right)$.
Furthermore, whenever a vector $\overline{\vec{v}}$ ends up being inside the polytope $\overline V_k$ produced at step $k$, then for any step $n>k$ all the vectors generated from $\overline{\vec{v}}$ they will be for sure inside or at most on the boundary of the polytopes $\overline V_n$. Hence if, for any $k\geq 1$, we get rid of all the vectors that are not on the boundary, then $\max_{\overline{\vec{v}}\in\overline V_k} \parallel \overline{\vec{v}} \parallel_{2}^{2}$ ends up being an overestimation of the quantity $\max_{\widetilde{P}_{k} \in \P_k(\widetilde{\cN})}\parallel \widetilde{P}_{k}\overline{\vec\varepsilon}_e(0)^{(h)} \parallel_{2}^{2}$ since $\overline V_k$ can include vertices computed at previous steps instead of all and only the possible vertices associated with products of length $k$. On the other hand we have the advantage of drastically reducing at each step $k$ the number of vertices that we need to study, as we show in the next Section.
\end{remark}

We are now ready to prove Proposition \ref{MainProp}

\begin{proof}
Using Lemmas from \ref{lemma_1} to \ref{lemma_5}  the conclusions of Proposition  \ref{MainProp} follow naturally.
\end{proof}

\begin{remark} Regarding the choice of the values $\tau$ and $\eta$. The first $\tau$ terms in \eqref{eq:J_UB3} have to be computed using a exhaustive analysis of all the switching signals with length smaller or equal to $\tau$, whose number increases exponentially with respect to $\tau$. Clearly the higher the $\tau$ the more accurate is this approximation of the first $\tau$ terms, but the higher is also the computational time.

Regarding $\eta$ value, the higher $\eta$ is the more polytopes $\overline V_k$ have to be constructed, but also the higher is the accuracy of this bound with respect to the bound obtained exploiting \eqref{eq:MinNormUB}.

In conclusion, one can choose $\eta$ such that the third term of \eqref{eq:MainProp} is negligible, and then choose $\tau$ as large as possible (to increase the overall accuracy) until the computational complexity (which increases exponentially with $\tau$) remains feasible.
\end{remark}
\section{Numerical Examples} \label{secSimulations}

In this Section we present some numerical examples of the application of the above technique.
We concentrate our analysis to cases with $D = \{0,\ 1\}$, but we point out that the proposed method allows to deal with any possible set of delays $D$.

As we mentioned in the previous Section, the maximal rate of growth of admissible products of matrices from the finite set $\cN$ \eqref{eq:N}, the Constrained Joint Spectral Radius of $\cN$, can be computed through the evaluation of the classical JSR of a proper lifting of $\cN$ \cite{kozyakin2014berger}.

In the case of $D = \{0,\ 1\}$ the lifted set is
\begin{equation}\label{eq:Nhat}
    \widehat{\cN}=\left\{\widehat{N}_{0 0},\, \widehat{N}_{1 0},\,\widehat{N}_{0 1},\,\widehat{N}_{1 1} \right\}\subset\mathbb{R}^{12n \times 12n}
\end{equation}
where
\small
\begin{equation*}
\widehat N_{0 0} = \left[
            \begin{array}{llll}
              N_{0 0} & 0 & 0 & 0\\
              0 & 0 & 0 & 0\\
              N_{0 0} & 0 & 0 & 0\\
             0 & 0 & 0 & 0\\
            \end{array}
          \right],
\widehat N_{0 1} = \left[
            \begin{array}{llll}
              0 & N_{0 1} & 0 & 0\\
              0 & 0 & 0 & 0\\
              0 & N_{0 1} & 0 & 0\\
             0 & 0 & 0 & 0\\
            \end{array}
          \right]
\end{equation*}
\begin{equation*}
\widehat N_{1 0} = \left[
            \begin{array}{llll}
            0 & 0 & 0 & 0\\
             0 & N_{1 0} & 0 & 0\\
              0 & 0 & 0 & 0\\
              0 & N_{1 0} & 0 & 0\\
            \end{array}
          \right],
\widehat N_{1 1} = \left[
            \begin{array}{llll}
              0 & 0 & 0 & 0\\
              0 & 0 & 0 & N_{1 1}\\
              0 & 0 & 0 & 0\\
             0 & 0 & 0 & N_{1 1}\\
            \end{array}
          \right]
\end{equation*}
\normalsize
whose JSR $\rho(\widehat{\cN})$ equals the CJSR of set $\cN$.
Furthermore the unitary matrix $U=\frac{1}{\sqrt{2}}\left[\left(
                  \begin{array}{cc}
                    \mathbb I_{6n} & \mathbb I_{6n} \\
                    \mathbb I_{6n} & -\mathbb I_{6n} \\
                  \end{array}
                \right)
\right]$ block triangularizes the set $\widehat{\cN}$ which becomes $\widehat{\cN}'$, such that $\rho(\widehat{\cN})$ equals
$\rho(\widehat{\cN}')$ where
\scriptsize
\begin{equation*}\label{eq:Ntilde}
    \widehat{\cN}'=\left\{\left[
            \begin{array}{ll}
              \overline{N}_{h k} & (-1)^{h}\overline{N}_{h\ k} \\
             0 & 0 \\
            \end{array}
          \right]\right\}\subset\mathbb{R}^{12n \times 12n}\; \textrm{ with } h,k\in\{0,\ 1\}.
\end{equation*}
\normalsize
Given the block triangular structure of this set it follows that  $\rho(\widehat{\cN}')$ equals
$\rho(\overline{\cN})$ where
\scriptsize
\begin{equation*}\label{eq:Nbar_D2}
    \overline{\cN}=\left\{\left[
            \begin{array}{ll}
              N_{0 0} & 0 \\
             0 & 0 \\
            \end{array}
          \right]\!, \left[
            \begin{array}{ll}
              0 & N_{0 1} \\
             0 & 0 \\
            \end{array}
          \right]\!,  \left[
            \begin{array}{ll}
              0 & 0 \\
             N_{1 0} & 0 \\
            \end{array}
          \right]\!, \left[
            \begin{array}{ll}
              0 & 0 \\
             0 & N_{1 1} \\
            \end{array}
          \right] \right\}.
\end{equation*}
\normalsize
So we work directly with the lifted set $\overline{\cN}$ such that $\rho(\overline{\cN})$ equals the CJSR of the set $\cN$, defined in \eqref{eq:N}.

We point out that the previous procedure can be directly extended to the case of delays $D \subseteq \{0, 1, 2, \dots, d_{max}\}$.

%Using a block representation with blocks of size $n (2d_{max}+1) \times n (2d_{max}+1)$, each matrix $\overline N_{\sigma(k), \ldots, \sigma(k-d_{max})}\in\overline{\cN}$ contains $|D|^{d_{max}}$ times $|D|^{d_{max}}$ blocks. All of them equal zero except the block at row $\sum_{j=2}^{d_{max}}\sigma(k-d_{max}+j)|D|^{j-1}+\sigma(k-d_{max}+1)+1$ and column $\sum_{j=1}^{d_{max}-1}\sigma(k-d_{max}+j)|D|^j+\sigma(k-d_{max})+1$ which equals $N_{\sigma(k), \ldots, \sigma(k-d_{max})}$.

In this paper we focus on the case $n=1$ which already presents the main features of the general case $n\in\N$, and without loss of generality we assume that $b=1$. Furthermore, to avoid trivial cases, we consider values of $a$ such that $|a|\geq 1$.

We observe that, from the numerical results we produced, for any $a\leq -1$  and for any $K$ the system appears to be non controllable. Hence we focus our attention on cases where $a\geq 1$. We consider, in particular, $a = 1.1$ and $1.5$.

\begin{remark}
The equations \eqref{eq:MainProp} have to be minimized over all possible values $K$. From \eqref{eq:MinNormUB} it is clear that there are only two quantities that can influence the right end side of \eqref{eq:MainProp} which are $\rho(\overline{\cN})$ and the equivalence constant $C$. We observe that, for $n=1$, the value of $C$ does not depend much on the quantity $K$ and that the JSR of a set of matrices is locally Lipschitz continuous with respect to any change in the entries of the matrices in the set \cite{wirth2005generalized}. Therefore, in order to obtain tighter upper bounds from  \eqref{eq:MainProp}, we can apply any standard minimization technique valid for continuous functions and numerically minimize the quantity $\rho(\overline{\cN})$ over $K$'s.
\end{remark}

\subsubsection{Case $a = 1.1$}

As explained in the previous sections we compute an upper bound for Problem \ref{pb:MinErr}. In particular we first compute, using a greedy method, $\max_{\sigma_{[0,\tau]}\in\Sigma}\sum\limits_{k=0}^{\tau} \|{\vec\varepsilon}_e(k)\|^2_2$ for a feasible value $\tau$ which, in this example, is set to 9. In Figure \ref{fig:n1_a1.1} (left) we plot in solid blue the values $\|{\vec\varepsilon}_e(k)\|^2_2$ for $0\leq k\leq 9$. Then, after obtaining the values $K_0\approx -0.6085$ and $K_1\approx 0.0941$ from the minimization of the joint spectral radius of the set $\cN$ by means of the default Matlab code \verb"fminsearch", we build the corresponding polytopes $\overline V_k$ and evaluate the quantity $\sum_{k=\tau+1}^{\eta} \alpha^{k} \max_{\overline{\vec{v}}\in\overline V_k} \parallel \overline{\vec{v}} \parallel_{2}^{2}$, ref. Figure \ref{fig:n1_a1.1} (left) black dotted curve. For $a=1.1$ the number of vertices in the polytope $\overline V_k$ after 50 steps is simply 166 versus a theoretical number of $4^{50}$ vertices. We observe that the number of vertices of $\overline V_k$, for all the values of $a\geq 1$ we tested, is always almost constant, around 140, and independent on the step number $k$. Finally, following \cite{GWZ05}, we construct a polytopic extremal norm for the set $\overline{\cN}$. We compute the constant of equivalence of the extremal polytope norm for $\overline{\cN}$ with the 2--norm, as explained in the previous section, and produce the values plotted using magenta crosses in Figure \ref{fig:n1_a1.1} (left). We point out that for $0\leq k \leq 9$ the total greedy method             estimate is approximately  $33.1$, while the one by polytope method is around $52.2$. Finally we observe that in this case the tail of the upper bound estimate, which is given by the third term in right hand side of \eqref{eq:MainProp}, is smaller than $10^{-17}$.

\begin{figure}[ht]
\centering
\begin{minipage}[b]{.48\linewidth}
  \centering
  \centerline{\includegraphics[width=\linewidth]{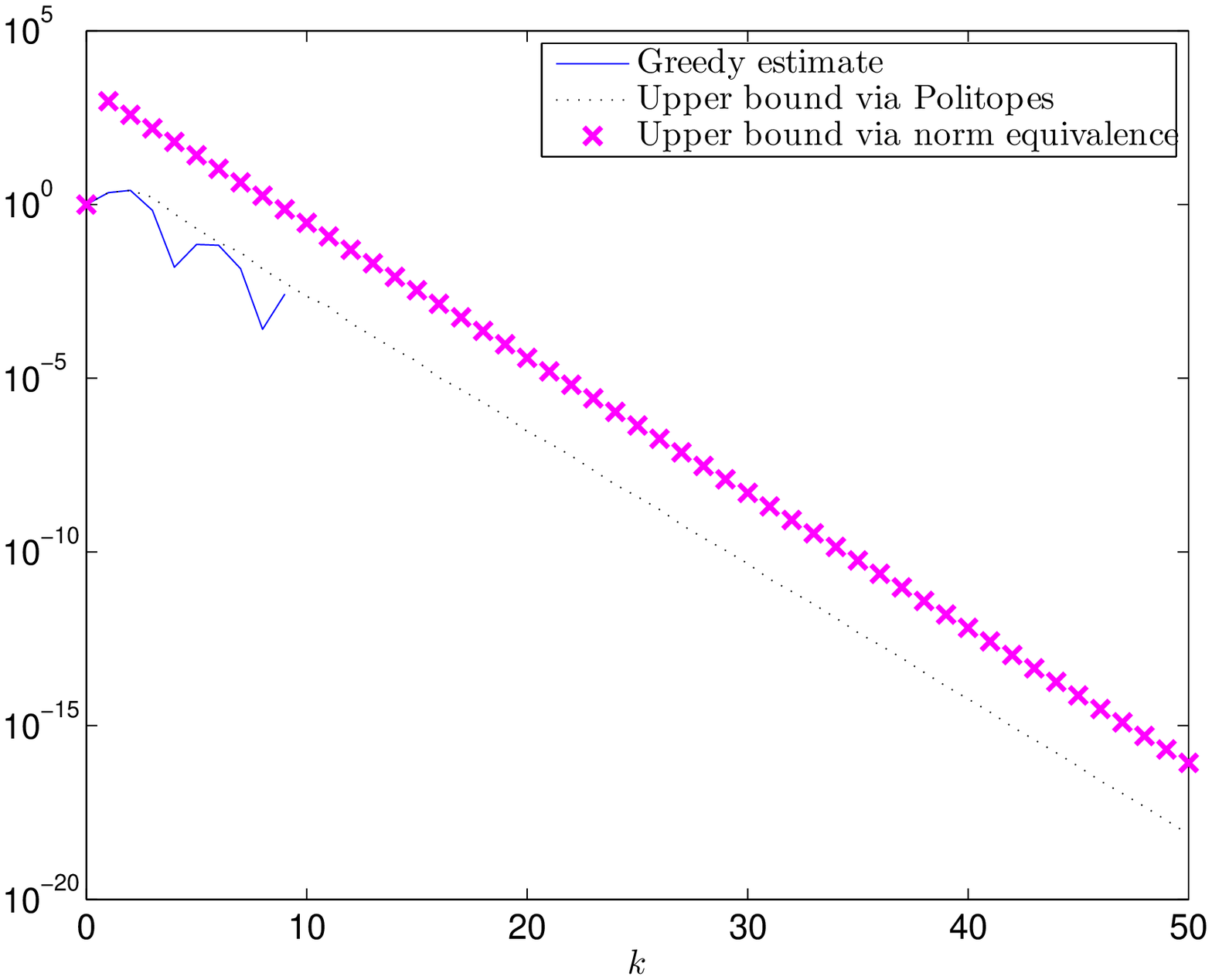}}
%  \vspace{1.5cm}
%\small\centerline{-----}\medskip
\end{minipage}
\hfill
\begin{minipage}[b]{0.48\linewidth}
  \centering
  \centerline{\includegraphics[width=\linewidth]{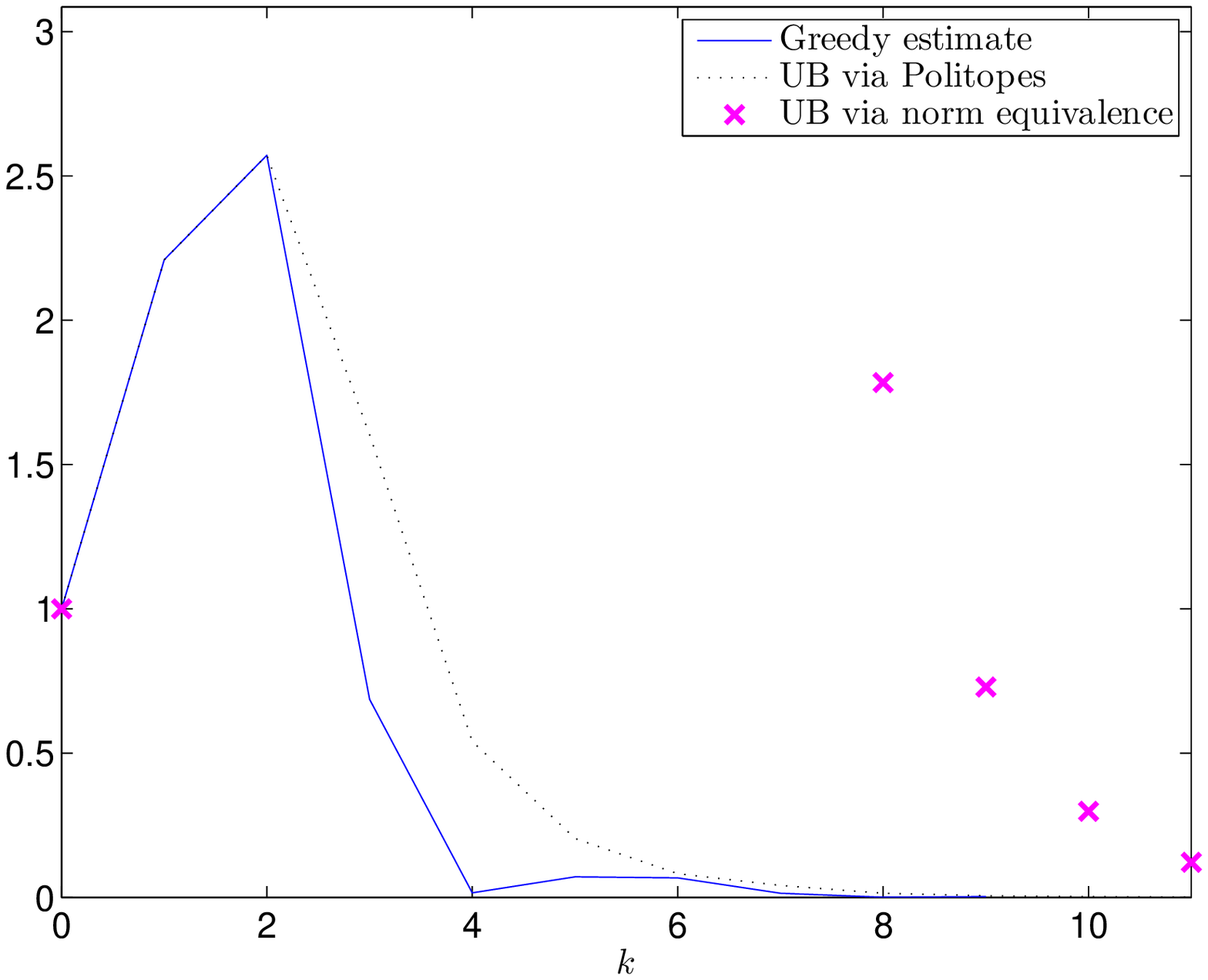}}
%  \vspace{1.5cm}
%\small\centerline{-----}\medskip
\end{minipage}
%
%\vspace*{-0.3cm}
\caption{Upper bounds in \eqref{eq:MainProp} computed for $a=1.1$.}\label{fig:n1_a1.1}
\end{figure}

%
%Case a = 1.10
%
% Estimates of the energy
%
% For k from 0 to 9
%
% Greedy method             33.13420018709749
% Polytope method           52.22467542026522
% Norms equivalence method  7118.36620929970560
%
%
% For k from 10 to 50
%
% Polytope method           0.01879812986981
% Norms equivalence method  2.28944342808189
%
%
% For k after 51
%
% Norms equivalence method  0.00000000000000

\subsubsection{Case $a = 1.5$}

As before, we run a greedy method up to $\tau=9$, Figure \ref{fig:n1_a1.1} (left) solid blue curve. From the minimization of the joint spectral radius of $\cN$ we get the values $K_0\approx  -0.9047$ and $K_1\approx 0.1430$ , for $10\leq k \leq \tau$ we build the polytopes $\overline V_k$, as showed in Figure \ref{fig:n1_a1.1} (left) black dotted curve. This time the number of vertices in $V_{50}$ steps is 156 versus, again, a theoretical number of $4^{50}$ vertices. Finally we compute the constant of equivalence of the polytopic extremal norm with the 2--norm which is around $90.7$,  magenta crosses in Figure \ref{fig:n1_a1.1} (left).  In this example for $0\leq k \leq 9$ the total greedy method estimate is approximately  $106.9$, while the polytopes one is around $163.6$. Finally the third term in right hand side of \eqref{eq:MainProp} in this case is smaller than $10^{-7}$.

\begin{figure}[ht]
\centering
\begin{minipage}[b]{.48\linewidth}
  \centering
  \centerline{\includegraphics[width=\linewidth]{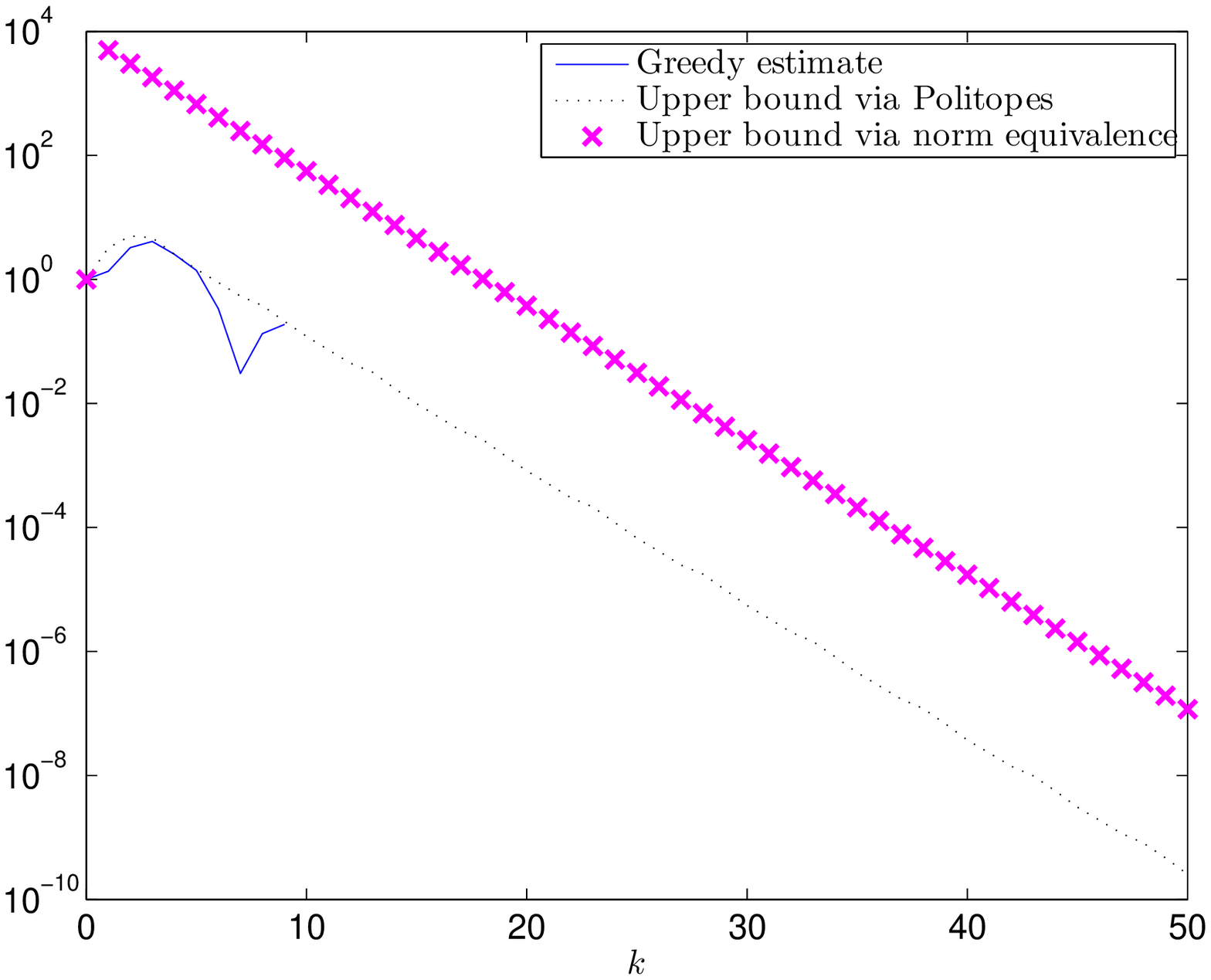}}
%  \vspace{1.5cm}
%\small\centerline{-----}\medskip
\end{minipage}
\hfill
\begin{minipage}[b]{0.48\linewidth}
  \centering
  \centerline{\includegraphics[width=\linewidth]{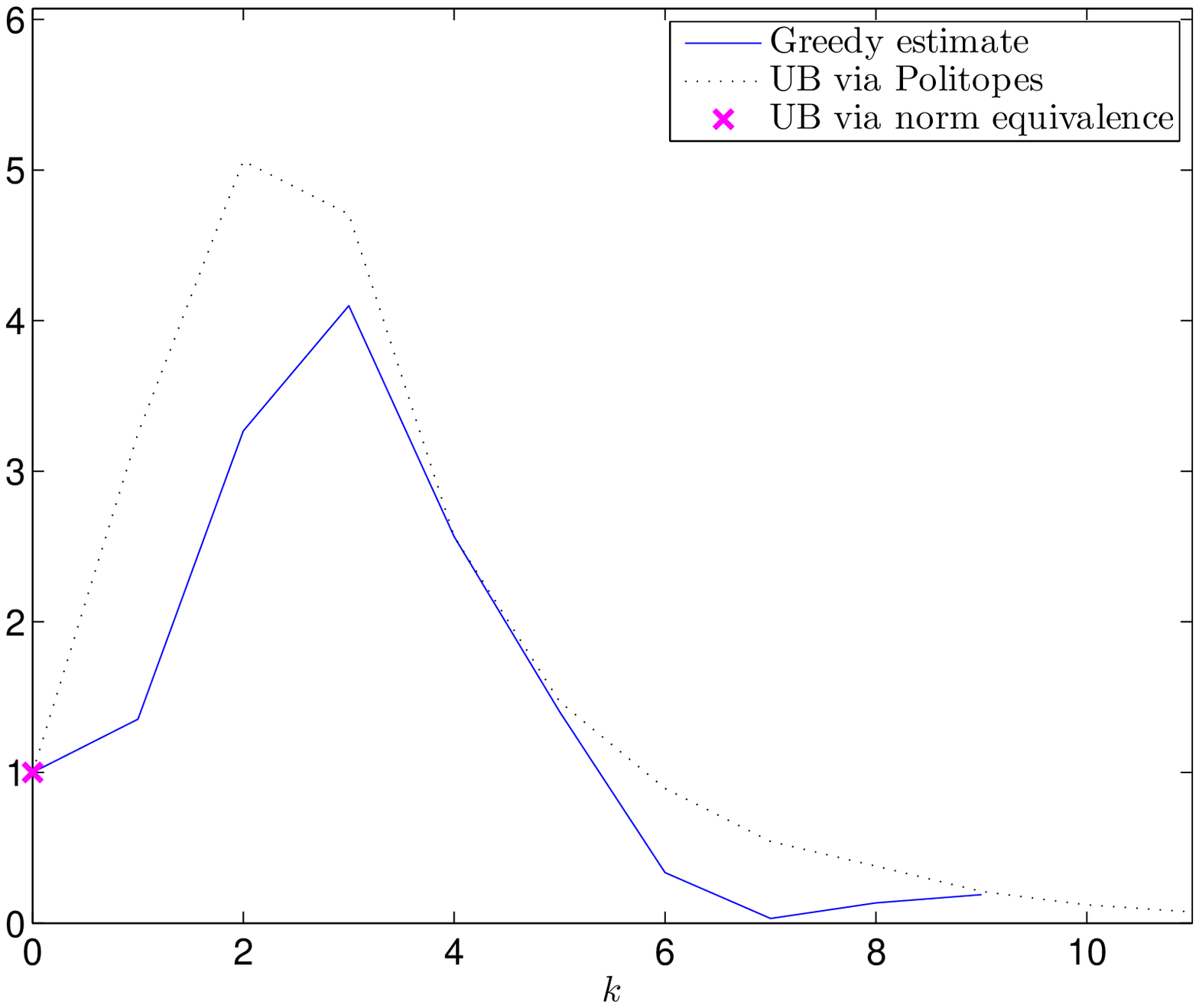}}
%  \vspace{1.5cm}
%\small\centerline{-----}\medskip
\end{minipage}
%
%\vspace*{-0.3cm}
\caption{Upper bounds in \eqref{eq:MainProp} computed for $a=1.5$.}\label{fig:n1_a1.5}
\end{figure}

% Case a = 1.50
%
% Estimates of the energy
%
% For k from 0 to 9
%
% Greedy method             106.85733243294260
% Polytope method           163.63785799172831
% Norms equivalence method  82339.70343136368300
%
%
% For k from 10 to 50
%
% Polytope method           2.55775337550966
% Norms equivalence method  1141.04492141429360
%
%
% For k after 51
%
% Norms equivalence method  0.00000008153023

These two examples show that the construction of the polytopes $\overline V_k$, which allows to speed up the calculations by means of a drastic reduction of the number of vertices at every $k$, allows to compute an upper bound for the error signal $\min_K\mathcal J$ which proves to be close to the greedy estimates in all the tests we ran. Furthermore the calculation of $C$, the constant of equivalence between norms, allows to complete the estimate of this upper bound in finite time.

\section{Conclusions and future work} \label{secConclusions}

We addressed the optimal control design problem for discrete--time LTI systems with state feedback, when the actuation signal is subject to unmeasurable switching propagation delays, due to e.g. the routing in a multi--hop communication network and/or jitter. Solving this problem for general switching systems is a challenging and open problem. Our contribution is providing a sub--optimal approach (which is reasonable, since the problem is still highly non--linear) leveraging the special structure induced by the switching delays and exploiting previous results on the computation of the extremal norm of a set of matrices.

In future work we will more deeply investigate the accuracy of our approach with respect to its computational complexity (related to the variables $\tau$, $\eta$ and $C$), extend our methods for solving the LQR problem for our mathematical model, and apply our results to realistic case studies. Also, we will extend our results to the output-feedback setting: in this case the technical challenges strongly depend on the fact that the communication network allows timestamping or not.

%%%%%%%%%%%%%%%%%%%%%%%%%%%%%%%%%%%%%%%%%%%%%%%%%%%%%%%%%%%%%%%%%%%%%%%%%%%%%%%%

%\nocite{*}

\def\cprime{$'$} \newcommand{\noopsort}[1]{} \newcommand{\singleletter}[1]{#1}

%==============================================================================================================
%==============================================================================================================

\end{document}